\newtheorem{mydef}{Proposition}
\title{On the Temporal Effects of Mobile Blockers in Urban Millimeter-Wave Cellular Scenarios}
\author{\IEEEauthorblockN{Margarita Gapeyenko, Andrey Samuylov, Mikhail Gerasimenko, Dmitri Moltchanov, Sarabjot Singh,\\
Mustafa Riza Akdeniz, Ehsan Aryafar, Nageen Himayat, Sergey Andreev, and Yevgeni Koucheryavy
\vspace{-0.8cm}
\thanks{Copyright (c) 2015 IEEE. Personal use of this material is permitted. However, permission to use this material for any other purposes must be obtained from the IEEE by sending a request to pubs-permissions@ieee.org.}
\thanks{M.~Gapeyenko, A.~Samuylov, M.~Gerasimenko, D.~Moltchanov, S.~Andreev, and Y.~Koucheryavy are with Tampere University of Technology, Tampere, Finland
(e-mail: \{firstname.lastname, evgeni.kucheryavy\}@tut.fi).}% <-this % stops a space
\thanks{S.~Singh, M.~R.~Akdeniz, N.~Himayat are with Intel Corporation, Santa Clara, CA, USA (e-mail: sarabjotsingh.in@gmail.com, \{mustafa.akdeniz, nageen.himayat\}@intel.com)}
\thanks{E.~Aryafar is with Portland State University, Portland, OR, USA (e-mail: earyafar@gmail.com)}
\thanks{This work was supported by Intel Corporation and the Academy of Finland. The work of S. Andreev was supported in part by a Postdoctoral Researcher grant from the Academy of Finland and in part by a Jorma Ollila grant from Nokia Foundation.}
}
}
\begin{document}

\maketitle

\begin{abstract}
Millimeter-wave (mmWave) propagation is known to be severely affected by the blockage of the line-of-sight (LoS) path. In contrast to microwave systems, at shorter mmWave wavelengths such blockage can be caused by human bodies, where their mobility within environment makes wireless channel alternate between the blocked and non-blocked LoS states. Following the recent 3GPP requirements on modeling the \textit{dynamic blockage} as well as the \textit{temporal consistency} of the channel at mmWave frequencies, in this paper a new model for predicting the state of a user in the presence of mobile blockers for representative 3GPP scenarios is developed: urban micro cell (UMi) street canyon and park/stadium/square. It is demonstrated that the blockage effects produce an \textit{alternating renewal process} with exponentially distributed non-blocked intervals, and blocked durations that follow the general distribution. The following metrics are derived (i) the mean and the fraction of time spent in blocked/non-blocked state, (ii) the residual blocked/non-blocked time, and (iii) the time-dependent conditional probability of having blockage/no blockage at time $t_{1}$ given that there was blockage/no blockage at time $t_{0}$. The latter is a function of the arrival rate (intensity), width, and height of moving blockers, distance to the mmWave access point (AP), as well as the heights of the AP and the user device. The proposed model can be used for system-level characterization of mmWave cellular communication systems. For example, the optimal height and the maximum coverage radius of the mmWave APs are derived, while satisfying the required mean data rate constraint. The system-level simulations corroborate that the use of the proposed method considerably reduces the \textit{modeling complexity}.
%We show that for realistic values of input parameters and all considered scenarios there is significant dependence between states of the user at consecutive time instants.
\end{abstract}

\begin{IEEEkeywords}
Cellular networks, mmWave, human body blockage, temporal consistency, mobility of blockers.
\end{IEEEkeywords}

%\IEEEpeerreviewmaketitle

%---------------------------------------------------------------------------------------------------------------------------------------------------------------------------------------------------------------------------------------------------------------------------------%
\vspace{-0.3cm}
\section{Introduction}\label{sec:intro}

% Reasons and promises

The rapidly growing number of mobile devices as well as the associated growth of mobile traffic call for an unprecedented increase in access capacity. To meet more stringent performance requirements, the use of the so-called cellular millimeter-wave (mmWave) technology operating at frequencies such as $28$~GHz and $73$~GHz has been proposed in fifth-generation (5G) mobile systems~\cite{Andrews2, nokia, Rappaport_13_mmWave}.

% Problems when using EHF band

Together with the phenomenal increase in access capacity, the use of the extremely high frequency (EHF) bands creates unique challenges for wireless communication systems. One of them is a need for development of appropriate mmWave channel models. Indeed, various groups and organizations have recently developed a number of such channel models~\cite{metis, standard_16, cost, mmW_Ch_M_16_JSAC_Rapaport, 5GPL_rapaport, Samimi_16, Sun_ICC17}. In contrast to microwave systems, the propagation characteristics of mmWave systems (with wavelengths of under a centimeter) are impacted not only by larger objects such as buildings, but also by much smaller obstacles such as cars, lampposts, and even humans. Given that mmWave systems are envisioned to be deployed in urban squares and streets, 3GPP has identified humans as one of the major factors affecting the mmWave propagation and has incorporated a blockage model into TR 38.901 of Release 14~\cite{standard_16}.

% How modeling is done

The performance of mobile communications systems is typically characterized by developing system-level simulation (SLS) frameworks~\cite{Zhang_INFOCOM_16, Mezzavilla_ACN_15}. Modeling the path loss with simple power-law abstractions, these SLS tools may take into account the necessary details of the target technologies and deliver their output results within a reasonable time. However, when conducting system-level evaluation of a mmWave system, in addition to the path loss model that captures the propagation environment, one needs to explicitly represent and track all of the relevant static and mobile objects with dimensions larger than a few centimeters. This significantly increases the required computational resources and expands simulation time.

Motivated by the new effects in mmWave communications systems as well as by the recent 3GPP requirements for 5G channel modeling, this paper studies the \textit{dynamic blockage} caused by humans in outdoor urban mmWave cellular deployments, while specifically concentrating on the \textit{temporal consistency} of the link states for a static user.

\vspace{-0.2cm}
\subsection{Background and Related Work}\label{sec:related}

The importance of dynamic blockage of the LoS path in mmWave deployments has recently been shown to be one of the critical design factors that affect system performance~\cite{metis, standard_16, cost, Petrov_JSAC17}. An example illustration of the measured path gain experienced by a node in a realistic crowded environment is shown in Fig.~\ref{fig:pathGain}. As one may observe, dynamic blockage by small mobile objects within the environment, such as moving people, cars, trucks, etc., introduces additional uncertainty in the channel, which may eventually result in sharp drops (up to $30\sim40$~dB) in the received signal strength \cite{dynamicBlockage, Weiler_WCL_16}. The blockage frequency, duration, and the resultant degradation of signal strength affect the performance of a mmWave system.

\begin{figure}[!t]
\centering
\includegraphics[width=0.8\columnwidth]{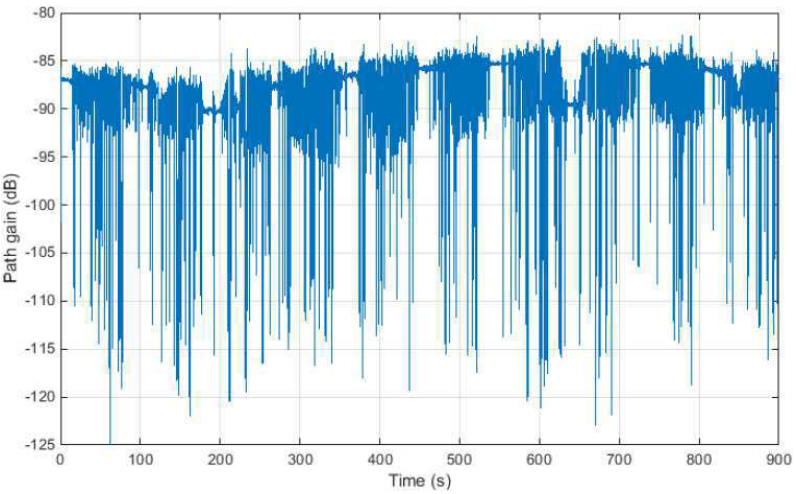}
\caption{Path gain in presence of dynamic blockage, reproduced from \cite{dynamicBlockage}.}
\vspace{-0.5cm}
\label{fig:pathGain}
\end{figure}

% Microwaves

Recent work has studied the impact of LoS blockage in urban microwave systems~\cite{heath, heath2}. However, the results do not directly apply to mmWave systems as the objects of interest in mmWave and microwave systems are of fundamentally different nature and hence would require different models for their accurate representation. Indeed, in addition to mobility of smaller obstacles, such as humans, one also needs to take into account their inherently random dimensions.

% mmWaves

The LoS blockage by humans in mmWave systems has been evaluated through simulation studies in~\cite{charlton}. In~\cite{gapeyenkoICC}, a LoS blockage model where humans are represented as cylinders of random width and height was proposed. However, there the authors assumed that both the users and the blockers are stationary. In addition to academic work, the 3GPP community is currently exploring various options for modeling the impact of human blockage appropriately~\cite{dynamicBlockage,channel_model_white}.

In~\cite{standard_16}, the human body blockage is taken into account by creating rectangular screens dropped onto the simulation map. A similar approach is adopted by~\cite{jacob_11_EUCAP, knife_edge}, where the authors also evaluated the accuracy of their methods. Due to the properties of the propagation model, which generates a random sample of the propagation path at each run, a particular attention of the 3GPP work groups is being paid to spatial and temporal consistency of the mmWave links~\cite{standard_16,channel_model_white}.

In~\cite{Dettmann16_Tem_Cor}, the authors contributed a model for temporal correlation of interference in a mobile network with a certain density of users. It was demonstrated that the correlated propagation states across the users significantly impact the temporal interference statistics. Analytically tractable models for correlated outdoor and indoor shadowing have been proposed in~\cite{heath_correlation_shadowing} and \cite{correlation_shadowing_outdoor}, thus accentuating the high correlation between the locations of the nodes and the shadowing effects. The analytical expression to characterize the correlation between the signals of two antennas was given in~\cite{Yacoub_corr}. 

%However, the analysis did not consider specifics of human blockage only the mean number of obstacles.

Even though there has been a considerable literature coverage on user mobility in general~\cite{mobility_TVT, real_mobility_16_VTC, Jacob_10,Kashiwagi_TVT_10}, to the best of our knowledge there are only a few studies that incorporate the user mobility into analytically tractable models~\cite{Samuylov_GCW16, Madadi_WiOpt16}. These latest results confirm the presence of memory in the LoS blockage process and highlight its dependence on the mobility characteristics of the users.

% We are different
\vspace{-0.2cm}
\subsection{State-of-the-Art and Contributions}

% What we do

The goal of this paper is to contribute a novel mathematical methodology that aims to characterize the dynamics and the temporal correlation of LoS human body blockage statistics. In this work, a model of the LoS blockage for a stationary user in a moving field of blockers is proposed. This scenario is more typical for outdoor mmWave systems as compared to stationary blockage models assumed in prior work. The blockers are modeled as cylinders of a certain height and width that enter the LoS zone of a mmWave receiver according to a Poisson process in time.

The analysis is based on the combined application of stochastic geometry, renewal process theory, and queuing models. Three different scenarios are addressed, including two street canyon use cases and a park layout (see Fig.~\ref{fig:analytic}). The metrics of interest are those reflecting temporal behavior of the LoS blockage process, such as the mean and the fraction of blocked/non-blocked LoS, the residual time in blocked/non-blocked states, and the time-dependent effects of conditional blocked/non-blocked state probabilities. %We then proceed characterizing performance of users' connections in crowded walking street scenario in SLS environment using conventional blockers' simulation technique and the proposed model. Our numerical and simulation results illustrate that the proposed approach is accurate and is applicable for system level simulations.

\begin{figure*}[!t]
\centering
\subfigure[\label{fig:analytic_a}{Sidewalk 1, S1}]
{\includegraphics[width=.28\textwidth]{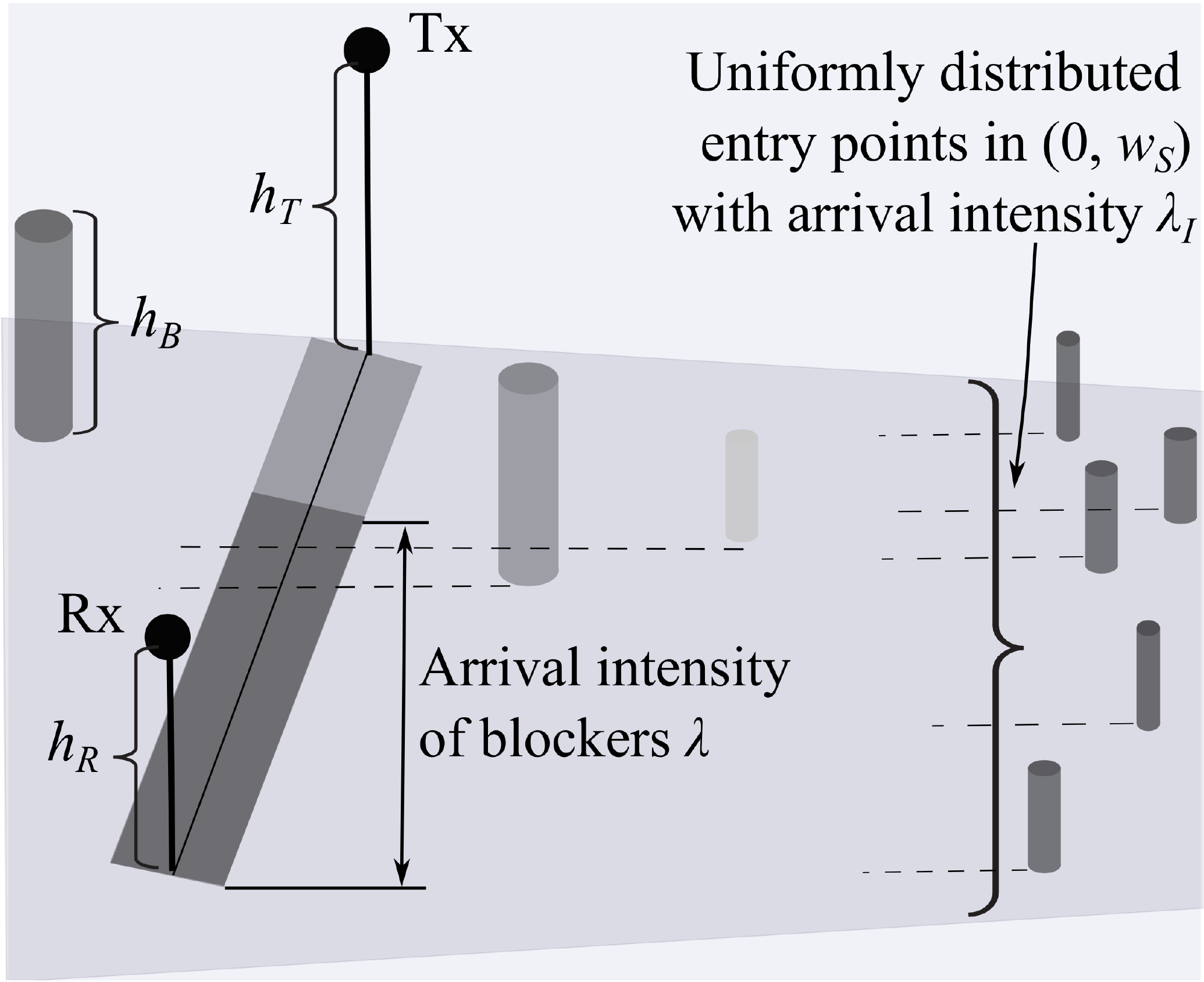}}~~~~~~~
\subfigure[\label{fig:analytic_b}{Sidewalk 2, S2}]
{\includegraphics[width=.28\textwidth]{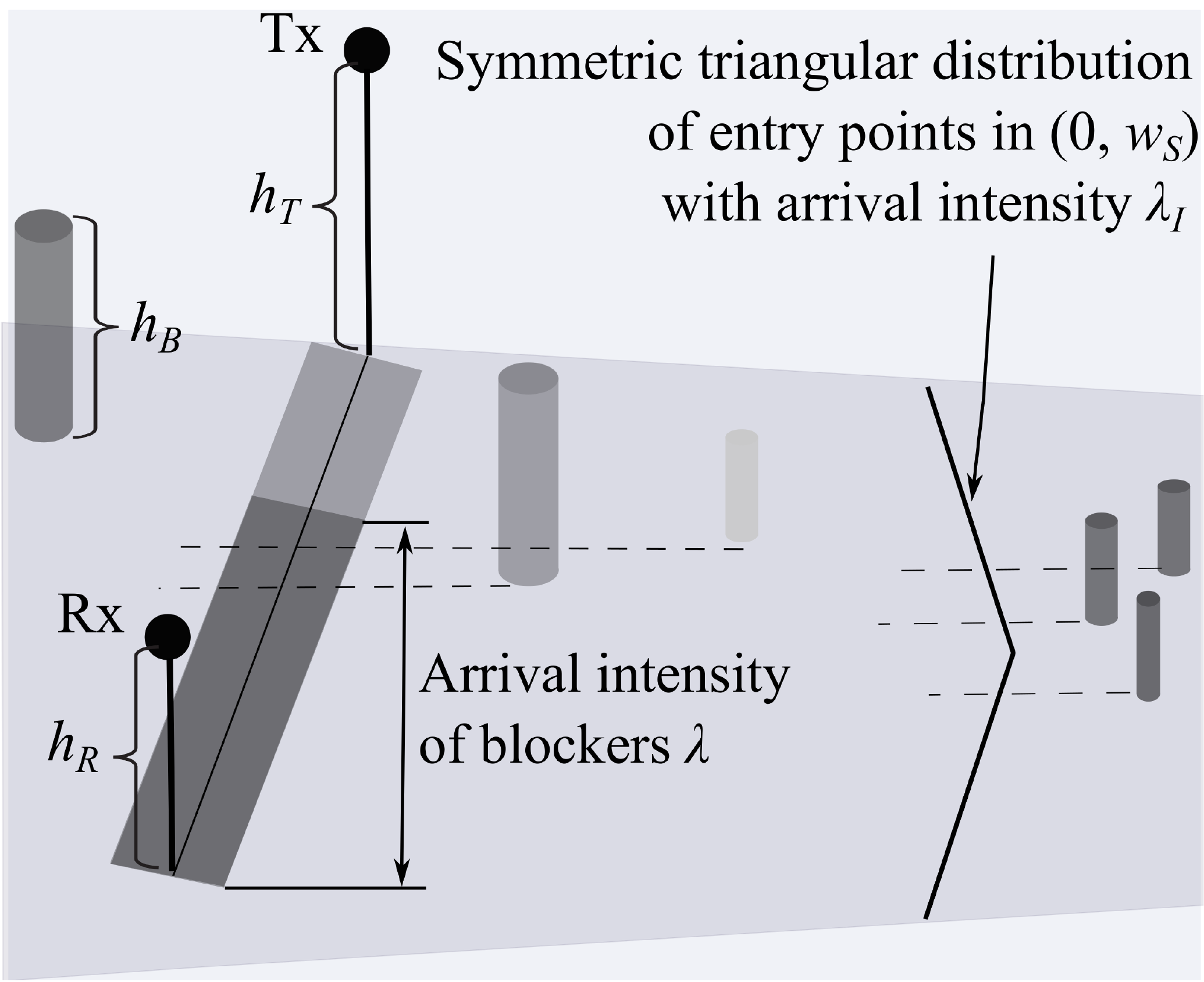}}~~~~~~~
\subfigure[\label{fig:analytic_c}{Park/stadium/square, S3}]
{\includegraphics[width=.28\textwidth]{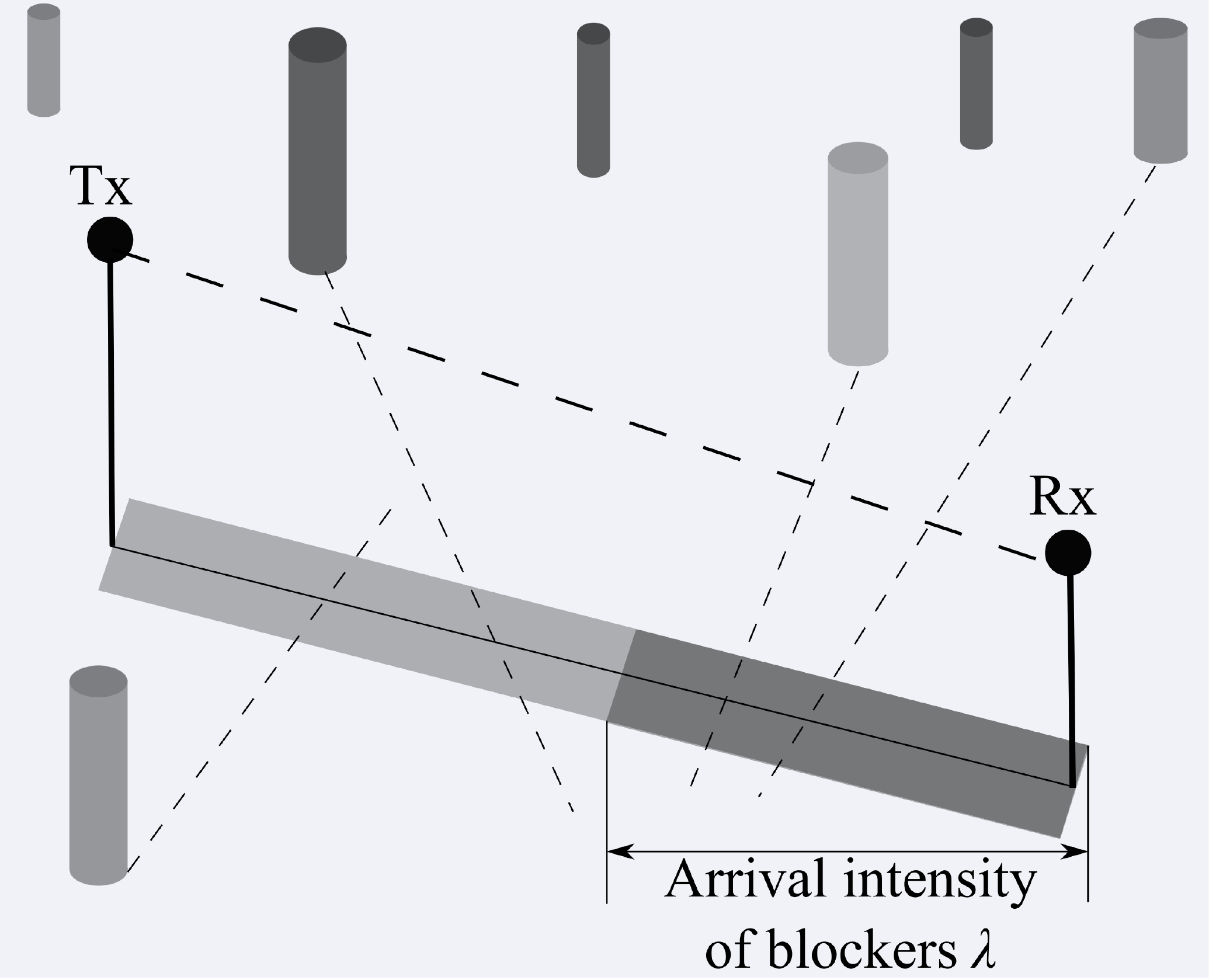}}
\caption{Three considered scenarios for further analytical modeling.}
%\vspace{-0.5cm}
\label{fig:analytic}
\end{figure*}

% Contributions

In summary, the following contributions are delivered by this work:
\begin{itemize}
	\item{To analyze the \textit{temporal correlation} and the \textit{dynamic blockage process} by human bodies at mmWave frequencies, a novel mathematical model is proposed. It is shown that the analytical expression could be utilized to replace explicit simulation of the mobile blockers in the SLS studies. The associated improvement in the simulation times depends on the crowd intensity and may reach several orders of magnitude.}
  \item{To capture the general structure of the dynamic LoS blockage process, including the impact of mobile obstacles, the corresponding mathematical methodology is developed. It is observed that non-blocked/blocked periods form an \textit{alternating renewal process} where the non-blocked intervals follow an exponential distribution and the blocked intervals have a general distribution. The latter is captured by employing methods for the busy period analysis in the $M/GI/\infty$ queuing model.}
  \item{To characterize the \textit{temporally consistent} human body blockage process, a simplified approach is developed to calculate the \textit{conditional probabilities}. It is demonstrated that for realistic input parameter values, in all the considered scenarios there always is a significant dependence between the states of the user at $t_{0}$ and $t_{1}$ over small timescales.}
  \item{To demonstrate the applicability of the proposed methodology, the optimal height of the mmWave AP that maximizes the average time in non-blocked LoS conditions as well as the maximum coverage radius that satisfies the required mean data rate are estimated.}
\end{itemize}

% How the rest is organized

The rest of this paper is organized as follows. In Section~\ref{sec:system}, the system model and a description of the outdoor scenarios proposed by 3GPP, which reflect real-life mmWave system usage situations, are introduced. The analysis for the performance metrics of interest is summarized in Section~\ref{sec:model}. The numerical results, particularly those related to the temporal dependencies in the LoS blockage process, are discussed in Section~\ref{sec:results}. Section~\ref{sect:apps} elaborates on the applications of the proposed methodology. Conclusions are presented in the last section of the paper.

%---------------------------------------------------------------------------------------------------------------------------------------------------------------------------------------------------------------------------------------------------------------------------------%
%\vspace{-0.1cm}
\section{System Model}\label{sec:system}

\subsection{General Considerations}

The proposed system model is illustrated in Fig.~\ref{fig:analytic}. The transmitter (Tx) and the receiver (Rx) are deployed at the heights of $h_T$ and $h_R$ from the ground, respectively. The two-dimensional distance between the Tx and Rx is $r_0$. Following~\cite{cylinders}, the potential blockers (i.e., humans) are modeled as cylinders with the height of $h_{B}$ and the base diameter of $d_{m}$.

Note that there always is an area between Tx and Rx, where the emergence of a blocker will cause occlusion of the mmWave LoS link. With the above parameters, this area may be approximated by a rectangular shape, named here the \textit{LoS blockage zone} and denoted as $ABCD$ in Fig.~\ref{fig:geometrical_scenario}. The particular dimensions of this area, its geometrical shape, and the position with respect to Tx and Rx can be estimated given the aforementioned parameters as discussed in what follows.

%\vspace{-0.2cm}
\begin{table}[b!]\footnotesize
\centering
\caption{Summary of notation and parameters}
\vspace{-0.2cm}
\begin{tabular}{p{2.1cm}p{5.9cm}}
\hline
\textbf{Notation}&\textbf{Description} \\
\hline
$h_{T}$, $h_{R}$, $h_B$ & Height of Tx, Rx, blockers\\
$r_0$ & Two-dimensional distance between Tx and Rx\\
$d_m$, $V$ & Diameter and speed of blockers\\
$w_S$ & Width of the sidewalk\\
$w_E$, $r$  & Effective width and length of LoS blockage zone\\
$\lambda_{I}$ & Initial arrival intensity of blockers per time unit\\
$\lambda$ & Arrival intensity of blockers entering the LoS blockage zone per time unit\\
$\lambda_{S}$ & Arrival intensity of blockers entering the unit area of LoS blockage zone\\
$\lambda_{N}$ & Density of users per unit area\\
$y_A$, $y_B$, $y_C$, $y_D$ & y-coordinates of the edges of the LoS blockage zone\\
$\alpha$ & Angle between Y-axis and the segment Tx-Rx\\
$L$ & Distance walked by a blocker in LoS blockage zone\\
$T$ & Residence time of a blocker in LoS blockage zone\\
$\omega_{j}$, $\eta_{j}$ & The non-blocked and blocked time interval\\
$F_{\omega}(x)$, $E[\omega]$ & CDF, the mean of non-blocked time interval\\
$F_{\eta}(x)$, $E[\eta]$  & CDF, the mean of blocked time interval\\
$F_{T}(x), f_{T}(x), \mathbb{E}[T]$& CDF, pdf, the mean of LoS zone residence time\\
$F_{Y}(x)$  & CDF of the y-coordinate of blocker entry point\\
$F_{\widetilde{Y}}(x)$ & Truncated distribution of the entry point defined on $y_A \leq x \leq y_C$\\
$F_{L}$, $f_{L}$& CDF and pdf of the residence distance $L$\\
$\mathbb{E}[T_l]$, $\mathbb{E}[T_n]$ & Fraction of time in non-blocked/blocked states\\
$F_{t_{\omega}}(x)$, $F_{t_{\eta}}(x)$ & Residual time in non-blocked/blocked states\\
$\xi_{j}$ & $j^{th}$ time interval equal to $\omega_{j} + \eta_{j}$\\
$F_{\xi}(x), f_{\xi}(x), E[\xi]$ & CDF, pdf, the mean of $\omega_{j} + \eta_{j}$\\
$f(x)$ & pdf of renewal process\\
$p_{00}$, $p_{01}$ & Conditional probabilities to be in non-blocked/blocked states at time $t_1$ ($0$ and $1$) given that there was non-blocked state at $t_0$\\
$p_{10}$, $p_{11}$ & Conditional probabilities to be in non-blocked/blocked states at time $t_1$ ($0$ and $1$) given that there was blocked state at $t_0$\\
\hline
\end{tabular}
\label{tab:analysis_parameters}
\end{table}

The speed of blockers $V$ is assumed to be constant. However, the actual mobility model of blockers depends on the scenario as introduced below. The main parameters and the description of employed notation are collected in Table~\ref{tab:analysis_parameters}.

\vspace{-0.3cm}
\subsection{Blocker Mobility and Arrival Modeling}\label{sec:mob_models}

% Three scenarios

To characterize the human mobility, the following three scenarios are considered:
\begin{itemize}
  \item{\textit{Sidewalk 1 (First scenario, S1).} In this scenario, the mmWave Tx (the AP) is assumed to be mounted on the wall of a building while the Rx may reside at any location on the sidewalk of width $w_S$ within the coverage area of the mmWave AP. The blockers move along the straight line parallel to each other and the side of the sidewalk at a constant speed of $V$ while their y-coordinates of crossing the width of the sidewalk are distributed uniformly within $(0,w_S)$, see Fig.~\ref{fig:analytic}(a). The arrival process of blockers crossing a vertical line -- the width of the sidewalk $w_S$ -- is Poisson in time with the arrival intensity $\lambda_I$.}
  \item{\textit{Sidewalk 2 (Second scenario, S2).} This scenario is similar to the previous case, except for how the blocker positions are distributed in the sidewalk. In practice, the users tend to move closer to the center of the walkway. Therefore, y-coordinates of crossing the width of the sidewalk are modeled by employing a symmetric triangular distribution over $(0,w_S)$, see Fig.~\ref{fig:analytic}(b). The arrival process of blockers crossing the width of the sidewalk is again Poisson in time with the arrival intensity $\lambda_I$ of blockers per time unit.}
  \item{\textit{Park/Stadium/Square (Third scenario, S3).} In this scenario, the users are allowed to enter and leave the mmWave LoS blockage zone at any point along the three sides of the rectangle, see Fig.~\ref{fig:analytic}(c). It is assumed that both the entry and the exit points are distributed uniformly over the side lengths for each individual user. The arrival process of users into the \textit{LoS blockage zone} is Poisson with the arrival intensity of $\lambda_I$ per time unit.}
\end{itemize}

% Some notes about the model

The proposed methodology generally allows to capture more specific types of blocker mobility. For example, one may decide to relax the assumption of the straight movement and thus model the \textit{walking street} environment, where the user trajectories are not required to remain parallel to the sides of the street. Also note that the straight trajectories of blocker mobility inside the LoS blockage zone are the direct consequence of small dimensions of the said zone, hence resulting in negligible changes of behavior with respect to the angle of motion.

% Metrics of interest

The considered metrics of interest are those pertaining to the temporal behavior of the LoS blockage process and include (i) the mean and the fraction of time in the blocked/non-blocked state as experienced by the Rx, (ii) the residual time in the blocked/non-blocked state, and (iii) the conditional probability that there is blocked/non-blocked state at $t_{1}$ given that there was blocked/non-blocked state at $t_{0}$, $t_{1}>t_{0}$.

%---------------------------------------------------------------------------------------------------------------------------------------------------------------------------------------------------------------------------------------------------------------------------------%

\section{Proposed System Analysis}\label{sec:model}

% How this section is organized

All of the three scenarios introduced in the previous Section~\ref{sec:system} can be characterized by following the proposed methodology. The key difference between them is in the distribution of the residence time in the LoS blockage zone (that is, the time that a blocker spends in the LoS blockage zone while crossing it). In this section, the general method to obtain the distribution of the zone residence time in the LoS blockage for the first scenario (i.e., Sidewalk 1, S1) is described. For the second and the third scenarios, the corresponding derivations are reported in Appendix~\ref{appendix:appendix_A}. Finally, the target metrics of interest are produced.

% For report only

The step-by-step analytical approach may be summarized as follows:
\begin{itemize}
  \item{Specify the zone where blockers may occlude the LoS path and thus determine the LoS blockage zone geometry;}
  \item{Describe the process of blockage by introducing the alternating renewal process that captures the non-blocked/blocked intervals;}
  \item{Obtain the probability density function (pdf) of the non-blocked time interval by analyzing the alternating renewal process in question;}
  \item{Produce the pdf of the blocked interval by representing it as a busy period in the M/GI/$\infty$ queuing system\footnote{According to the Kendall notation: $M$ is Poisson arrival process, $GI$ is general distribution of service time, and $\infty$ is infinite number of servers.}, where the service time distribution corresponds to the time spent by a blocker in the LoS blockage zone;}
  \item{Calculate all of the metrics of interest, including the moments, the residual time distributions, as well as the conditional non-blocked/blocked state probabilities by applying conventional techniques.}
\end{itemize}

\subsection{LoS Blockage Zone Geometry}

%\vspace{-0.2cm}
\begin{figure}[!b]
\centering
\includegraphics[width=0.8\columnwidth]{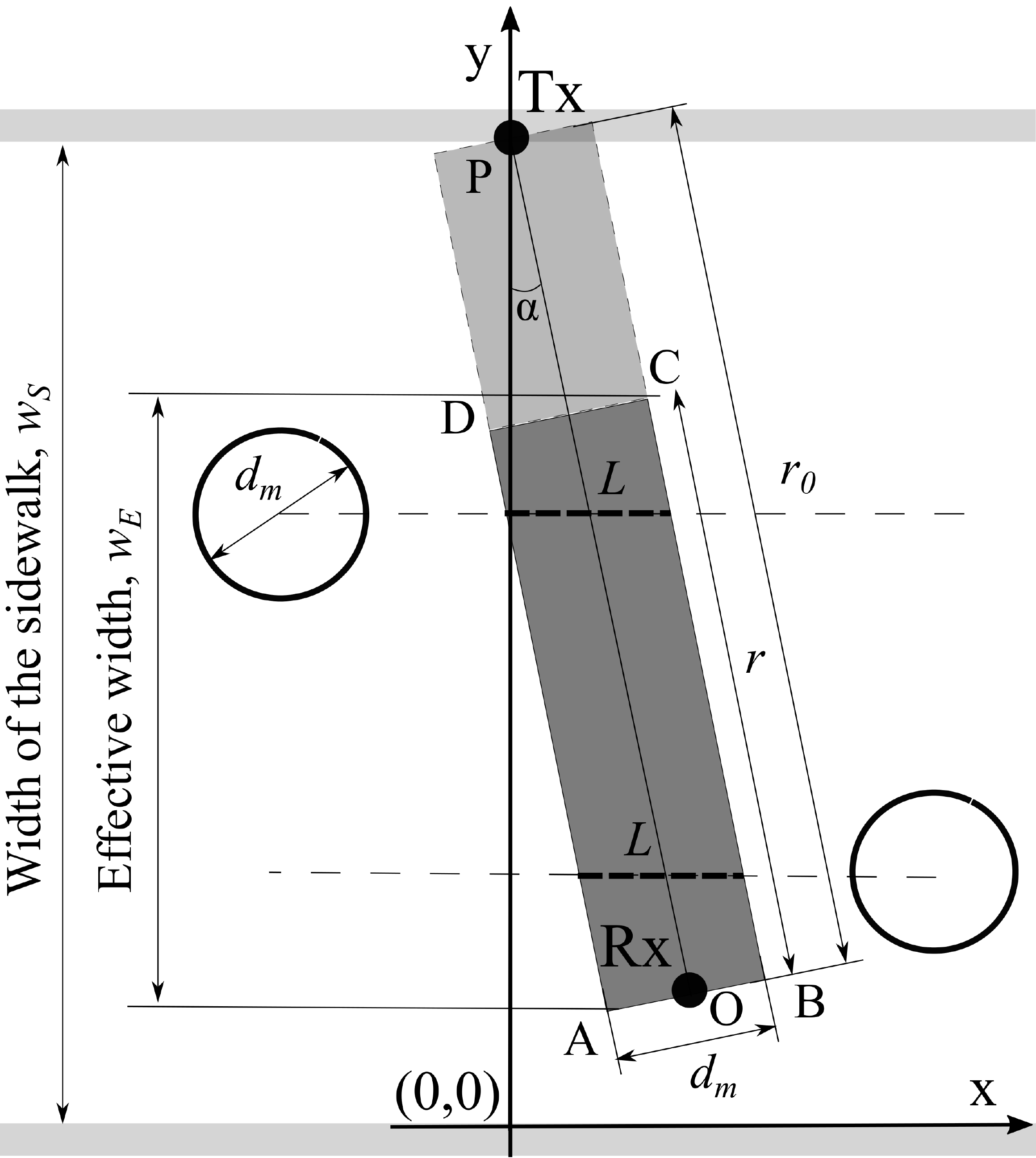}
\caption{Geometry of the LoS blockage zone.}
\label{fig:geometrical_scenario}
\end{figure}

Consider the geometrical scenario represented in Fig.~\ref{fig:geometrical_scenario}. It should be noted that rectangle ABCD, named the LoS blockage zone, is the only area where the presence of a blocker causes blockage of the LoS link. Any blocker which appears outside of this zone (closer to Tx, outside of ABCD) will not affect the LoS link.

Since any blocker entering the LoS blockage zone in question (particularly, the center of a cylinder) occludes the LoS, the width of the zone equals the base diameter of the blocker, $d_{m}$. The length of this zone reflects the maximum possible distance, where the height of the blocker still affects the LoS. As illustrated in Fig.~\ref{fig:geometrical_scenario}, from the geometrical considerations the latter follows as
\begin{align}
r=\frac{r_0(h_B-h_R)}{h_T-h_R} + d_m/2,
\end{align}
where $r_{0}$ is two-dimensional distance between the Tx and the Rx, while $h_{B}$, $h_{R}$, and $h_{T}$ are the heights of the blocker, Rx, and Tx, respectively.

The coordinates of Tx and Rx located at the points P and O (see Fig.~\ref{fig:geometrical_scenario}), respectively, are then given by
\begin{align}\label{eqn:coord1}
&x_P = 0,\qquad{}\qquad{}y_P = w_S, \nonumber \\
&x_O = r_0 \sin(\alpha),\,\,\, y_O = w_S - r_0 \cos(\alpha).
\end{align}

The coordinates of the blockage zone vertices are thus
\begin{align}\label{eqn:coord2}
&x_A = x_O - \frac{d_m}{2} \cos(\alpha),\quad{}y_A = y_O - \frac{d_m}{2} \sin(\alpha), \nonumber \\
&x_B = x_O + \frac{d_m}{2} \cos(\alpha),\quad{}y_B = y_O + \frac{d_m}{2} \sin(\alpha), \nonumber \\
&x_C = x_B - r \cos(\frac{\pi}{2} - \alpha), y_C = y_B + r \sin(\frac{\pi}{2} - \alpha), \nonumber \\
&x_D = x_A - r \cos(\frac{\pi}{2} - \alpha), y_D = y_A + r \sin(\frac{\pi}{2} - \alpha),
\end{align}
where $\alpha$ is the angle characterizing the position of the Rx in relation to the Tx location, as shown in Fig.~\ref{fig:geometrical_scenario}.

\begin{figure*}[!!t]
\begin{align}\label{eqn:busyPeriod}
F_{\eta}(x)=1-\Bigg([1-F_T(x)]\left[1-\int_{0}^{x}(1-F_{\eta}(x-z))\exp(-\lambda F_T(z))\lambda{}dz\right]
+ \int_{0}^{x}(1-F_{\eta}(x-z))|de^{-\lambda F_T(z)}|\Bigg).
\end{align}
\hrulefill
\normalsize
\end{figure*}

% Geometry of LoS blockage zone

\subsection{Renewal process analysis}\label{sec:method}

\begin{figure}[b!]
\centering
\includegraphics[width=0.9\columnwidth]{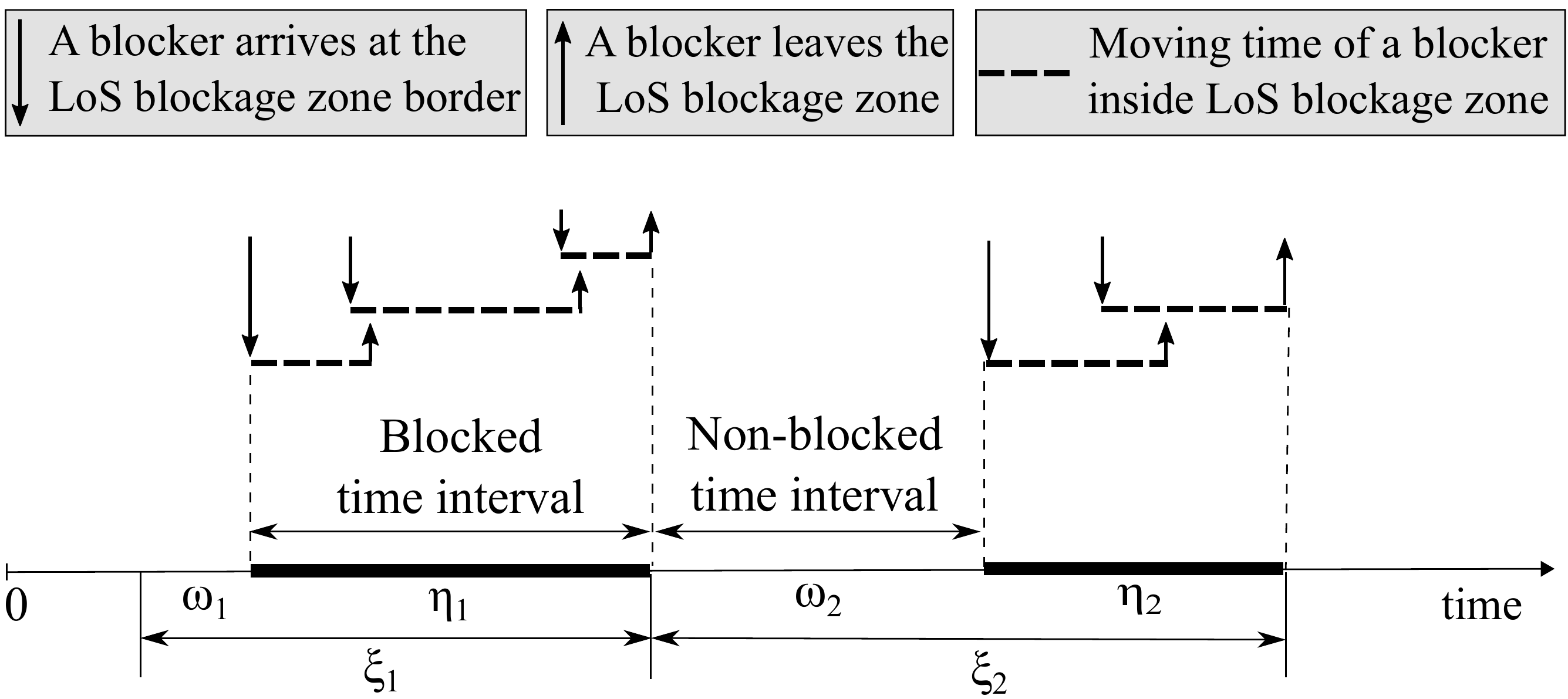}
\caption{Renewal process associated with the LoS blockage, where every blocker might spend different time when occluding the LoS.}
\label{fig:renewal_process}
\end{figure}

% Renewal process definition

Let $\omega_{j}$ and $\eta_{j}$, $j=1, 2, \ldots$, denote the time spent in the non-blocked and blocked intervals, respectively, as shown in Fig.~\ref{fig:renewal_process}. As one may observe, these intervals alternate, that is, non-blocked period always precedes the blocked one and vice versa. Since the entry of blockers into the LoS blockage zone is modeled as a Poisson process, durations of non-blocked and blocked intervals are mutually independent. Hence, the process of the LoS blockage can be modeled as an alternating renewal process, as displayed in Fig.~\ref{fig:renewal_process}. The proposed methodology is valid for all three scenarios of interest.

Define $\xi_{j}=\omega_{j}+\eta_{j}$. The points $0$, $\xi_{1}$, $\xi_{1} + \xi_{2}$, and $\xi_{1} + \xi_{2} + \xi_{3}$ are the renewal moments that form the process at hand. The density of this process follows from \cite{cox_Miller} as
\begin{align}\label{eqn:51}
f(x)=\lambda F_{T}(x)\exp\left(-\lambda \int_{0}^{x}[1-F_{T}(y)]dy\right),
\end{align}
where $\lambda$ is the intensity of the blocker arrivals into the zone, $F_{T}(y)$ is the cumulative distribution function (CDF) of the zone residence time $T=L/V$ for a single blocker, where $L$ is the distance over which a blocker travels inside the blockage zone.

% Distribution of time in LoS state is exponential and why

The time spent in the non-blocked part, $\omega_{j}$, follows an exponential distribution with the parameter $\lambda$, $F_{\omega}(x)=1-e^{-\lambda x}$, with the mean $\mathbb{E}[\omega]=1/\lambda$ \cite{cox}. This result follows directly from the fact that the left-hand sides of the time intervals spent in the LoS blockage zone by a single blocker follow a Poisson process in time with the arrival intensity of $\lambda$ per time unit. Therefore, the time period between the end of an interval $\eta_j$ (see Fig.~\ref{fig:renewal_process}), which is considered as an arbitrary point, and the starting point of the next interval $\eta_{j+1}$ is distributed exponentially.

% Distribution of nLoS interval
Consider now the blocked interval.
\begin{mydef}
\label{prop2}
Let $F_{\eta}(x)$ be the CDFs of the time in the blocked intervals, $\eta_{j}$, $j=1,2,\dots$, with the mean of $\mathbb{E}[\eta]$. The distribution of the blocked interval, $F_{\eta}(x)$, is the same as the distribution of a busy period in the $M/GI/\infty$ queuing system given by (\ref{eqn:busyPeriod}), see e.g.,~\cite{daley}.
\end{mydef}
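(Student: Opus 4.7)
The plan is to establish the proposition by exhibiting an explicit bijection between the blockage dynamics inside the rectangle $ABCD$ and the evolution of an $M/GI/\infty$ queue, and then invoking the classical busy-period characterization from \cite{daley}. First, I would identify each blocker that enters the LoS blockage zone with a customer arriving at the queue. By assumption, blockers enter the zone according to a Poisson process with intensity $\lambda$, so the ``arrival'' stream is Poisson, which gives the $M$ in the Kendall notation. The ``service time'' associated with each arrival is the residence time $T=L/V$ of that blocker inside the zone, whose distribution $F_T$ is derived from the chord-length analysis of the geometry in Fig.~\ref{fig:geometrical_scenario}; since different blockers traverse the zone independently and at constant speed $V$, the residence times are i.i.d.\ with the common general distribution $F_T$, giving the $GI$ component. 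Finally, the LoS is blocked whenever at least one blocker is present in the zone regardless of how many others are simultaneously inside, which corresponds to every customer being served immediately on arrival, i.e.\ an infinite number of parallel servers.

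Second, under this mapping the blocked interval $\eta_j$ is exactly a busy period of the $M/GI/\infty$ queue. Indeed, $\eta_j$ starts at the epoch when a blocker arrives at a zone that is empty of other blockers (the system transitions from $0$ to $\geq 1$ customers) and ends at the first subsequent instant at which the zone empties again (transition from $\geq 1$ back to $0$). This matches the textbook definition of a busy period verbatim, and the strong Markov property of the Poisson arrivals ensures that successive busy periods are i.i.d., so $\eta_1,\eta_2,\dots$ all share the same CDF $F_\eta$.

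Third, I would obtain the integral equation (\ref{eqn:busyPeriod}) by conditioning on the service time $T=z$ of the customer that initiates the busy period and on whether any further arrivals occur during $(0,z]$. During the residual service of the initiator, fresh arrivals land at rate $\lambda$; each such arrival at time $u\in(0,z]$ itself generates an independent sub-busy-period (by the infinite-server property and the strong Markov property at $u$), and the overall busy period ends with the last of these sub-periods. Writing the no-further-arrival case and the at-least-one-further-arrival case separately, using $\exp(-\lambda F_T(z))$ for the probability that no arrival during the residual service survives past time $z$ (which uses the standard ``thinning'' of the Poisson process by residual service length) and convolving with the distribution of the longest nested busy period gives the two summands inside the bracket of (\ref{eqn:busyPeriod}); the term $|de^{-\lambda F_T(z)}|$ captures the density of the last renewal of this nested structure. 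The full derivation is carried out in \cite{daley}, which I would cite rather than reproduce.

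The main obstacle is the conceptual step of recognizing that the physical ``occlusion by any blocker'' rule maps onto an infinite-server, not a single-server, discipline; once that is clear, the Poisson-in / i.i.d.\ services translation is immediate and the analytic form (\ref{eqn:busyPeriod}) is a direct citation. A secondary technical point is verifying that the arrivals relevant to the busy period are indeed Poisson with the same rate $\lambda$ used for the non-blocked interval analysis in Section~\ref{sec:method}; this follows from PASTA and the fact that the thinning of the sidewalk arrival stream that actually enters $ABCD$ has already been absorbed into $\lambda$ (as opposed to $\lambda_I$) in the notation of Table~\ref{tab:analysis_parameters}.
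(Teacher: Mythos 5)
Your proposal is correct and follows essentially the same route as the paper: both establish the correspondence between blockers occluding the LoS and customers in an $M/GI/\infty$ queue (Poisson entries into the zone as arrivals, residence times $T=L/V$ as i.i.d.\ service times, occlusion by at least one blocker as the infinite-server discipline), identify the blocked interval with the busy period, and cite \cite{daley} for the CDF in (\ref{eqn:busyPeriod}). Your version is if anything slightly more careful than the paper's, which only sketches the analogy verbally and does not spell out the conditioning argument behind the integral equation.
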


\begin{proof}
The proposition is proved by exploiting the analogy with the busy time distribution in $M/GI/\infty$ queuing system. Consider now an evolution of a busy period in $M/GI/\infty$ system. It starts at some $t=t_1$ with a customer arriving into the system. Each arrival during the service time of this customer prolongs the busy period if and only if its service time is greater than the service time of the customers that are currently in service. The busy period ends when a customer upon its departure leaves an empty system. Analyzing the illustration of the renewal process associated with the LoS blockage interval, the analogy with the busy period in $M/GI/\infty$ system is established. Indeed, each blocker extends the LoS blockage period if and only if its blockage time is greater than the blockage time of those blockers currently occluding the LoS. The CDF of the busy period in $M/GI/\infty$ system has been obtained in \cite{daley} and is provided in (\ref{eqn:busyPeriod}).\qedhere
\end{proof}

Note that (\ref{eqn:busyPeriod}) can be evaluated numerically for any $F_{T}(x)$.

\subsection{Residence Time in the LoS Blockage Zone}

To proceed further with deriving the metrics of interest, the CDF of the residence time $T=L/V$ in the LoS blockage zone for a single user is required. Recalling the principles of linear transformation of random variables~\cite{ross}, the pdf of the time $T=L/V$ (for all the scenarios of interest) reads as
\begin{align}
f_T(x)&=Vf_L(xV).
\end{align}
Hence, it is sufficient to find the pdf of distance $L$ that one blocker travels inside the LoS blockage zone, $f_L$, in order to derive $f_T$.
The notation employed in what follows is clarified in Fig.~\ref{fig:geometrical_scenario}. Note that the arrival intensity of the blockers $\lambda$ that enter the LoS blockage zone is different for all the considered scenarios and is derived in what follows by using $\lambda_I$. The latter is the initial arrival intensity of blockers that cross the width of the sidewalk for the first and second scenarios, S1 and S2 (see Fig.~\ref{fig:analytic}). For the sake of the analysis, the park/square scenario, S3, has the arrival intensity of $\lambda_I=\lambda$.

Note that the derivation of distance $L$ is a scenario-specific part of the analysis as it requires a certain distribution of the entry points of blockers to the LoS blockage zone.

\textit{First scenario, S1.} Let $F_{Y}(x), 0 \leq x \leq w_S$, be the CDF of the $y$-coordinate of the entry point for a blocker. Since only the blockers crossing the blockage area are of interest, this distribution is truncated. The resulting distribution $F_{\widetilde{Y}}(x)$ is defined on $y_A \leq x \leq y_C$.

The CDF of the distance $L$ traversed by a blocker in the LoS blockage zone is therefore
\begin{align}\label{eqn:coord3}
\hspace{-0.9em}F_{L}(x) =
\begin{cases}
0, x < 0, \\
F_{\widetilde{Y}}(y_C) - F_{\widetilde{Y}}\left(y_C - x \cos(\alpha) \sin (\alpha)\right)  \\
-\left( F_{\widetilde{Y}}(y_A) - F_{\widetilde{Y}}\left(y_A + x \cos(\alpha) \sin (\alpha)\right) \right),  \\
0 \leq x < x_{min}, \\
1, x \geq x_{min},
\end{cases}
\end{align}
where $x_{min}=\min(d_m/\cos(\alpha),r/\sin(\alpha))$.

For the sidewalk 1 scenario (S1) (see Fig.~\ref{fig:analytic}(a)), (\ref{eqn:coord3}) takes the form of
\begin{align}\label{eqn:coord4}
F_{L}^{1}(x) =
\begin{cases}
0, \qquad{}\qquad{}\qquad{}\qquad{}\,\,x \leq 0, \\
\displaystyle{\frac{x \sin(2\alpha)}{y_C - y_A}}, \,\,\,\,\,\,\,\,0 < x \leq x_{min}, \\
1, \qquad{}\qquad{}\qquad{}\,\,\,\,\,x > x_{min}.
\end{cases}
\end{align}

The arrival intensity of blockers entering the zone that affects the LoS for the first scenario, S1, is delivered as
\begin{align}\label{eqn:544}
\lambda = \lambda_I \frac{w_E}{w_S},
\end{align}
where $w_S$ is the width of the sidewalk, $\lambda_I$ is the arrival intensity of blockers on the width $w_S$, and $w_E = \max (y_A,y_B,y_C,y_D) - \min (y_A,y_B,y_C,y_D)$ is the projection of rectangle $ABCD$ on Y-axis, named the effective width, as shown in Fig.~\ref{fig:geometrical_scenario}.

The residence time in the LoS blockage zone for the second and the third scenario is derived in Appendix~\ref{appendix:appendix_A}.

\vspace{-0.2cm}
\subsection{Metrics of Interest}

\subsubsection{Mean and Fraction of Time in Non-Blocked/Blocked State} The fraction of time in the non-blocked/blocked state can be produced by utilizing the mean time spent in each state, i.e., \cite{cox}
\begin{align}\label{eqn:56_}
&\mathbb{E}[T_{l}]=\frac{\mathbb{E}[\omega]}{\mathbb{E}[\omega]+\mathbb{E}[\eta]},\,\mathbb{E}[T_{n}]=\frac{\mathbb{E}[\eta]}{\mathbb{E}[\omega]+\mathbb{E}[\eta]},
\end{align}
where $\mathbb{E}[\omega]$ and $\mathbb{E}[\eta]$ are the means of the non-blocked/blocked intervals.

Recall that due to the exponential nature of $\omega$, $\mathbb{E}[\omega]=1/\lambda$. The mean $\mathbb{E}[\eta]$ can be obtained numerically by using (\ref{eqn:busyPeriod}). However, there is a simpler approach that is outlined below. Observe that the renewal density $f(x)$ is $f(x)=1/\mathbb{E}[\xi]$, when $t\rightarrow \infty$. From (\ref{eqn:51}), after employing the Laplace transform (LT), we establish that it is also equal to $f(x)=\lambda\exp⁡(-\lambda \mathbb{E}[T])$, where $\mathbb{E}[T]$ is the mean zone residence time for a single blocker, see~\cite{gapeyenkoICC} for details. Hence, the following holds
\begin{align}\label{eqn:541}
\mathbb{E}[\xi]=\frac{1}{\lambda}\exp(\lambda \mathbb{E}[T]).
\end{align}

Then, $\mathbb{E}[\eta]$ can be established as
\begin{align}\label{eqn:155}
\mathbb{E}[\eta]&=\int_{0}^{\infty}[1-F_{\eta}(x)]dx\nonumber\\
&=\int_{0}^{\infty}\left(1-F_{\xi}(x)-\frac{f_{\xi}(x)}{\lambda}\right)dx=\mathbb{E}[\xi]-\frac{1}{\lambda}.
\end{align}

Substituting (\ref{eqn:541}) into (\ref{eqn:155}), we arrive at
\begin{align}\label{eqn:55_}
\mathbb{E}[\eta]=\frac{1}{\lambda}[\exp(\lambda \mathbb{E}[T])-1].
\end{align}

\subsubsection{Residual Time in Non-Blocked/Blocked State} Here, the distribution of the residual time spent in the non-blocked/blocked state given that the user is currently in the non-blocked/blocked state is characterized. Recall that the distribution of the non-blocked interval is exponential, while the CDF for the blocked interval is provided in (\ref{eqn:busyPeriod}). Hence, the residual time distribution in the non-blocked state is also exponential with the same parameter. Therefore, the residual blocked time CDF is
\begin{align}\label{eqn:56}
F_{t_{\eta}}(t)=\frac{1}{\mathbb{E}[\eta]}\int_{0}^{t}[1-F_{\eta}(y)]dy,
\end{align}
and the residual non-blocked time CDF is
\begin{align}\label{eqn:100}
F_{t_{\omega}}(t)=1-e^{-\lambda t},\,t\geq{}0.
\end{align}

\subsubsection{Conditional Non-Blocked/Blocked State Probabilities} Consider now two instants of time, $t_{0}=0$ and $t_{1}$, $t_{1}-t_{0}=t>0$. Denoting the non-blocked and the blocked states by $0$ and $1$, respectively, the conditional probabilities, $p_{00}(t)$, $p_{01}(t)$ as well as $p_{10}(t)$, $p_{11}(t)$ that there is non-blocked/blocked state at $t_{1}$ given that there was non-blocked/blocked state at $t_{0}$ are calculated further. The general solution for this problem follows from \cite{cox} and particularly $p_{00}(t)$ can be established as
\begin{align}\label{eqn:03}
p_{00}(t)= \frac{\mathbb{E}[\omega]}{\mathbb{E}[\omega]+\mathbb{E}[\eta]} + \frac{g(t)}{\mathbb{E}[\omega]},
\end{align}
where $g(t)$ has the LT of
\begin{align}
g^{*}(s) = \frac{\mathbb{E}[\omega]\mathbb{E}[\eta]}{(\mathbb{E}[\omega]+\mathbb{E}[\eta])s} - \frac{(1-f_{\omega}^{*}(s))(1-f_{\eta}^{*}(s))}{s^2(1- f_{\omega}^{*}(s)f_{\eta}^{*}(s))},
\end{align}
where $f_{\omega}^{*}(s)$ and $f_{\eta}^{*}(s)$ are the LTs of $f_{\omega}(x)$ and $f_{\eta}{(x)}$, respectively.

In the target case, the density of the blocked period is not available in a closed form, thus preventing from transitioning to the LT domain. For practical calculations, a simpler approach is proposed below based on utilizing the time domain convolutions. Observe that the probabilities $p_{00}(\Delta{t})$ and $p_{01}(\Delta{t})$ can be represented as
\begin{align}\label{eqn:sum}
&p_{00}(\Delta{t}) = \sum_{i=0}^{\infty}\mathbb{P}\{A_{i}(\Delta{t})\},\nonumber\\
&p_{01}(\Delta{t}) = \sum_{i=1}^{\infty}\mathbb{P}\{B_{i}(\Delta{t})\},
\end{align}
where $A_{i}(t)$ are the events corresponding to starting in the non-blocked interval at $t_{0}$ and ending in the non-blocked interval after some $\Delta{t}=t_{1}-t_{0}$, while having exactly $i$, $i=0,1,\dots$, blocked periods during $\Delta{t}$. Similarly, $B_{i}(t)$ are the events corresponding to starting in the non-blocked interval at $t_{0}$ and ending in the blocked interval at $t_{1}$, while having exactly $i$, $i=1,2,\dots$, non-blocked periods during $\Delta{t}$.

The probability of the event $A_{0}$, which is defined as residing in the non-blocked interval $\omega$ at time $t_1 = t_{0} + \Delta t$ given that the system was in the same non-blocked state $\omega$ at time $t_0$, is produced by
\begin{align}\label{eqn:103}
\mathbb{P}\{A_0(\Delta{t})\} = 1-F_{t_{\omega}}(\Delta{t}),
\end{align}
where $F_{t_{\omega}}(\Delta{t})$ is the residual time in the non-blocked period as obtained in (\ref{eqn:100}).

\begin{figure*}[!t]
\centering
    \subfigure[{Fraction of time spent in the blocked interval for one user}]
    {
        \includegraphics[height=0.3\textwidth]{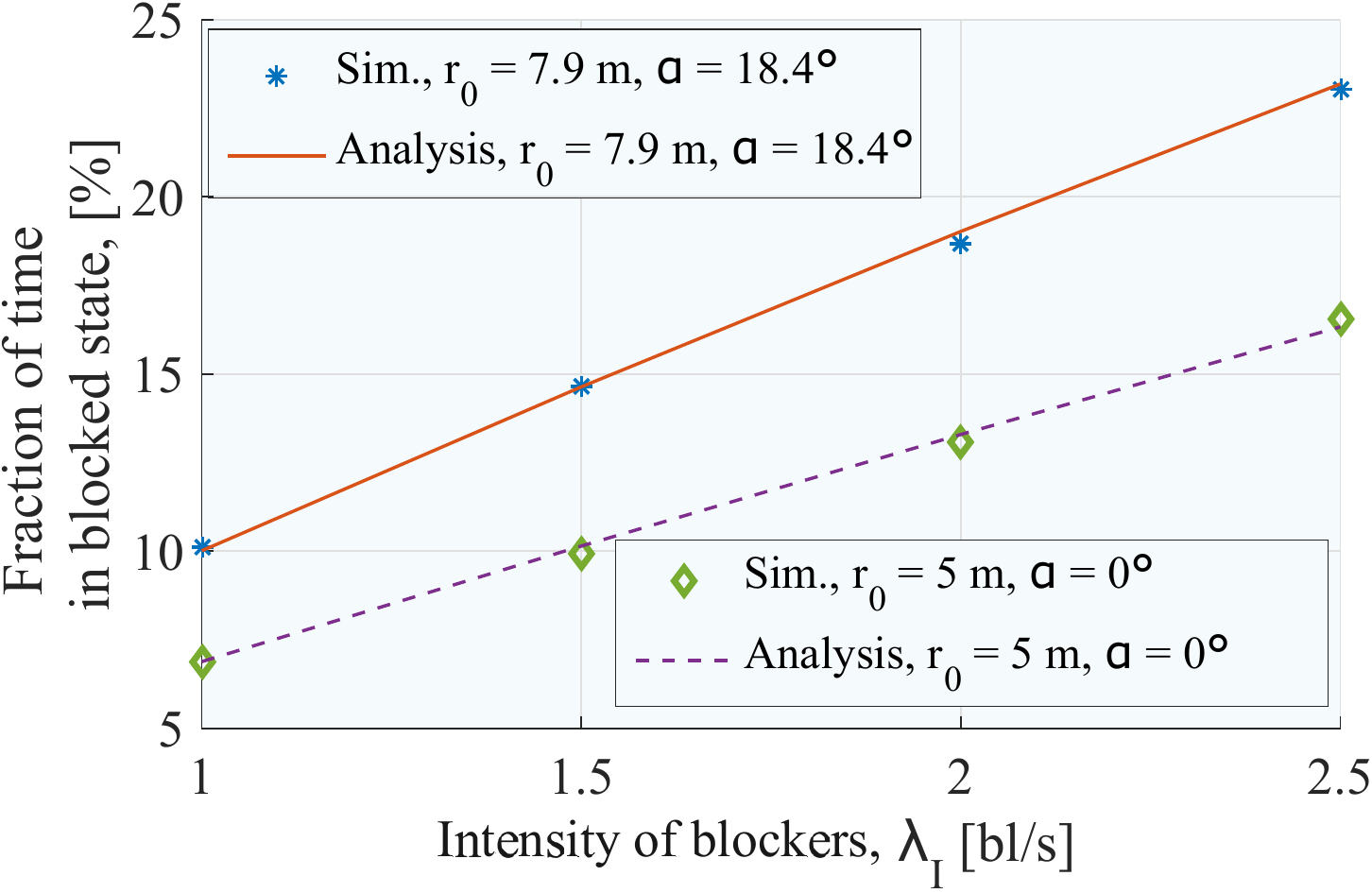}
        \label{fig:av_bl_sls}
    }
    \subfigure[{CDF of time in the blocked interval, $\lambda_{I}$ = 3 bl/s}]
    {
        \includegraphics[height=0.3\textwidth]{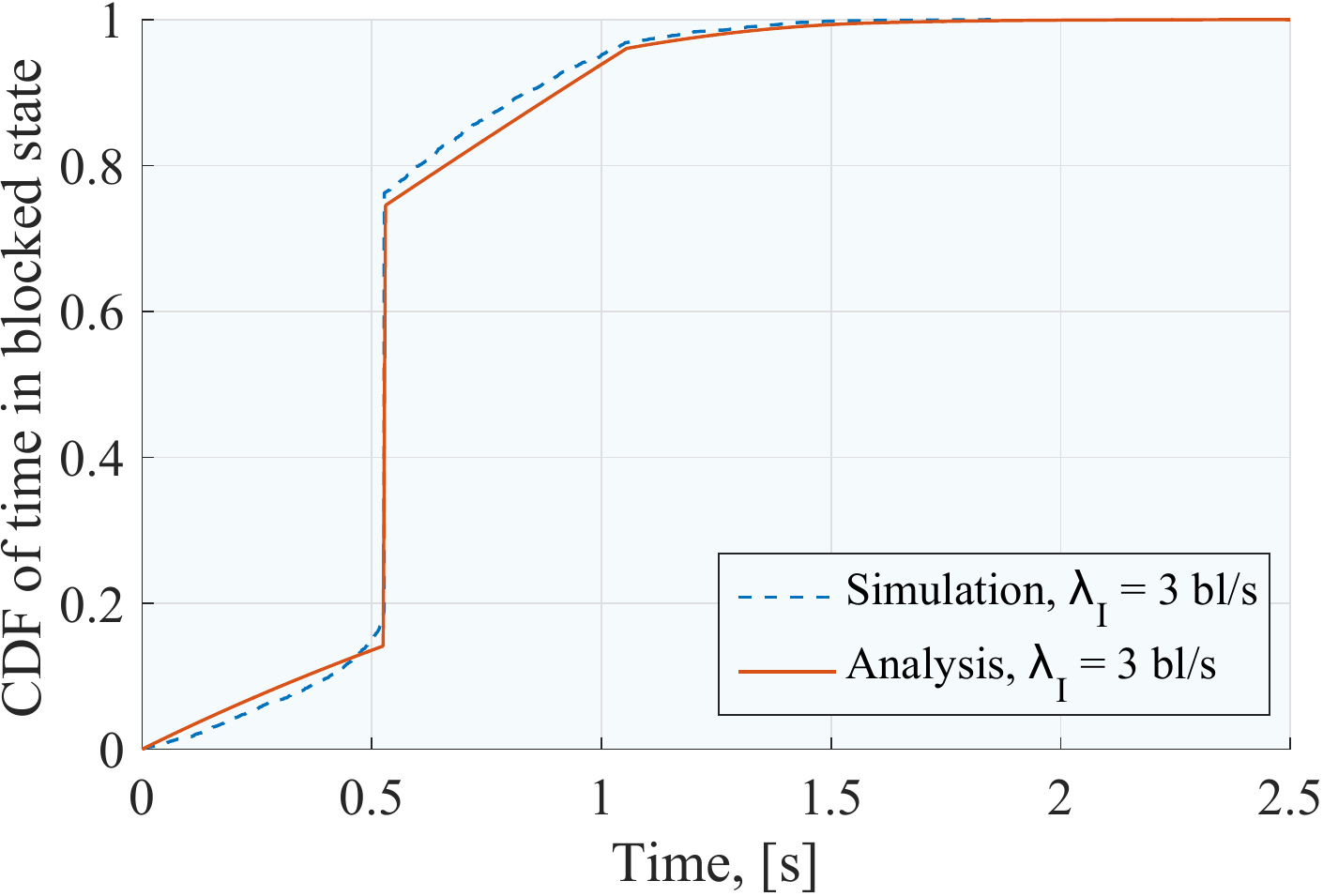}
        \label{fig:cdf_bl_sls}
    }
    \caption{Benchmarking the analytical model against the simulation results for the first usage scenario, S1.}
    \vspace{-0.4cm}
    \label{fig:graphs}
\end{figure*}

The probability of the event $B_{1}$, which is defined as residing in the blocked interval $\eta$ at time $t_{1}$ given that the system was in the preceding non-blocked state $\omega$ at time $t_{0}$, is
\begin{align}\label{eqn:104}
\hspace{-1em}\mathbb{P}\{B_1(\Delta{t})\} &= 1 - F_{\eta + t_{\omega}}(\Delta{t}) - (1-F_{t_{\omega}}(\Delta{t}))\nonumber\\
&= F_{t_{\omega}}(\Delta{t}) - F_{\eta + t_{\omega}}(\Delta{t}),
\end{align}
where $F_{\eta}$ is the CDF of the blocked interval from (\ref{eqn:busyPeriod}) and $F_{\eta+\omega}$ denotes the CDF of the sum of random variables $x$ and $y$. As it was stated previously, note that the random variables $\eta$, $\omega$, and $t_{\omega}$ are independent. Hence, the CDF of the sum $F_{\eta+\omega}$ is obtained by convolving the densities of $x$ and $y$ and then integrating from $0$ to $x$.

Consider the event $A_{1}$ corresponding to when the Rx is in the non-blocked interval at time $t_{1}$ given that it was in the preceding non-blocked interval at time $t_{0}$ (there is a blocked interval embedded in between $t_{0}$ and $t_{1}$). The probability of this event is
\begin{align}
\mathbb{P}\{A_1(\Delta{t})\} = F_{\eta + t_{\omega}}(\Delta{t})
- F_{\omega + \eta + t_{\omega}}(\Delta{t}),
\end{align}
where $F_{\omega}$ is the CDF of the non-blocked interval.

Further, the probability of the event $B_{2}$ that the Rx is in the blocked interval at $t_{1}$ given that it was in the preceding blocked interval at $t_{0}$ (there is an additional non-blocked interval embedded in between $t_{0}$ and $t_{1}$), is established as
\begin{align}
\hspace{-1em}\mathbb{P}\{B_2(\Delta{t})\} = F_{\omega + \eta + t_{\omega}}(\Delta{t}) - F_{\eta + \omega + \eta + t_{\omega}}(\Delta{t}).
\end{align}

%\label{eqn:104_Bi}

Finally, the following is obtained
\begin{align}\label{eqn:104_Ai}
\hspace{-1em}\mathbb{P}\{A_i(\Delta{t})\} &= F_{\sum_{j = 1}^{i - 1}(\eta + \omega) + \eta + t_{\omega}}(\Delta{t}) \nonumber\\
& - F_{\sum_{j = 1}^{i}(\eta + \omega) + t_{\omega}}(\Delta{t}), \qquad{}i \geq 1,\nonumber\\
\hspace{-1em}\mathbb{P}\{B_i(\Delta{t})\} &= F_{\sum_{j = 1}^{i - 1}(\eta + \omega) + t_{\omega}}(\Delta{t}) \nonumber\\
& - F_{\sum_{j = 1}^{i - 1}(\eta + \omega) + \eta +  t_{\omega}}(\Delta{t}),\quad{}i \geq 1.
\end{align}

Note that the sum in (\ref{eqn:sum}) is infinite, and the probabilities $p_{00}$ and $p_{01}$ are numerically approximated by summing the terms up to the next summand that is sufficiently close to zero, until when the desired accuracy is achieved. The probabilities $p_{10}(\Delta{t})$ and $p_{11}(\Delta{t})$ are obtained similarly. 

%---------------------------------------------------------------------------------------------------------------------------------------------------------------------------------------------------------------------------------------------------------------------------------%

\section{Accuracy Assessment and Numerical Analysis}\label{sec:results}

In this section, the accuracy of the proposed model is assessed by benchmarking against system-level simulations. Then, the extent of temporal dependence under study is characterized as a function of the input parameters.

\vspace{-0.2cm}
\subsection{Accuracy Assessment}

In Fig.~\ref{fig:graphs}, the benchmarking of the proposed analytical model is conducted by utilizing our in-house simulation framework developed specifically for the purposes of this study. For the sake of exposition, it is assumed that the location of the user device of interest is fixed. The initial number of deployed blockers is calculated based on the arrival intensity of blockers entering the width of the sidewalk, $\lambda_I$. Particularly, considering the first usage scenario, S1, whenever the simulation is started, new blockers appear at the sidewalk edge of length $w_S$ according to a Poisson process with the arrival intensity of $\lambda_I$. Blockers then move around across the deployment with the constant speed up to the edge of the deployment area.

\begin{table}[!b]\footnotesize
\centering
\vspace{-0.3cm}
\caption{Baseline system parameters}
\begin{tabular}{p{5.9cm}p{2.1cm}}
\hline
\textbf{Parameter}&\textbf{Value} \\
\hline
Height of Tx, $h_T$&$3$~m\\
Height of Rx, $h_R$&$1.3$~m\\
Tx-Rx distance, $r_0$&$4.6$~m\\
Height of a blocker, $h_B$ & $1.7$~m\\
Diameter of a blocker, $d_m$ & $0.5$~m\\
Speed of a blocker, $V$ & $1$~m/s\\
Width of the sidewalk, $w_S$ & $5$~m\\
Angle, $\alpha$ & $\pi/6$\\
Frequency & $28$~ GHz\\
Bandwidth, $B$ & $1$~ GHz\\
Noise level & $-84$~ dBm\\
Transmit power & $30$~ dBm\\
\hline
\end{tabular}
\vspace{-0.3cm}
\label{tab:sim_parameters}
\end{table}

Fig.~\ref{fig:av_bl_sls} reports on the average user blockage time for the first scenario (S1) obtained by using simulations as well as produced with the proposed analytical model for the width of the sidewalk, $w_S = 10$ m where the remaining parameters are given in the plot and Table \ref{tab:sim_parameters}. The target accuracy was set to $10^{-4}$ which required from 6 to 9 summands in (\ref{eqn:sum}) to achieve it. As one may observe, the analytical results agree well with the simulation data, while both increase linearly with the growing arrival intensity of blockers. To assess the time correlation in the non-blocked/blocked state, the CDF of blocked duration is displayed in Fig.~\ref{fig:cdf_bl_sls}, where the width of the sidewalk is taken as $w_S = 10$~m, $r_0 = 7.9$~m, and $\alpha = 18.4^{\text{o}}$ with the rest of the parameters given in the Table~\ref{tab:sim_parameters}. 

\begin{figure}[!b]
\vspace{-0.2cm}
\centering
\includegraphics[width=0.9\columnwidth]{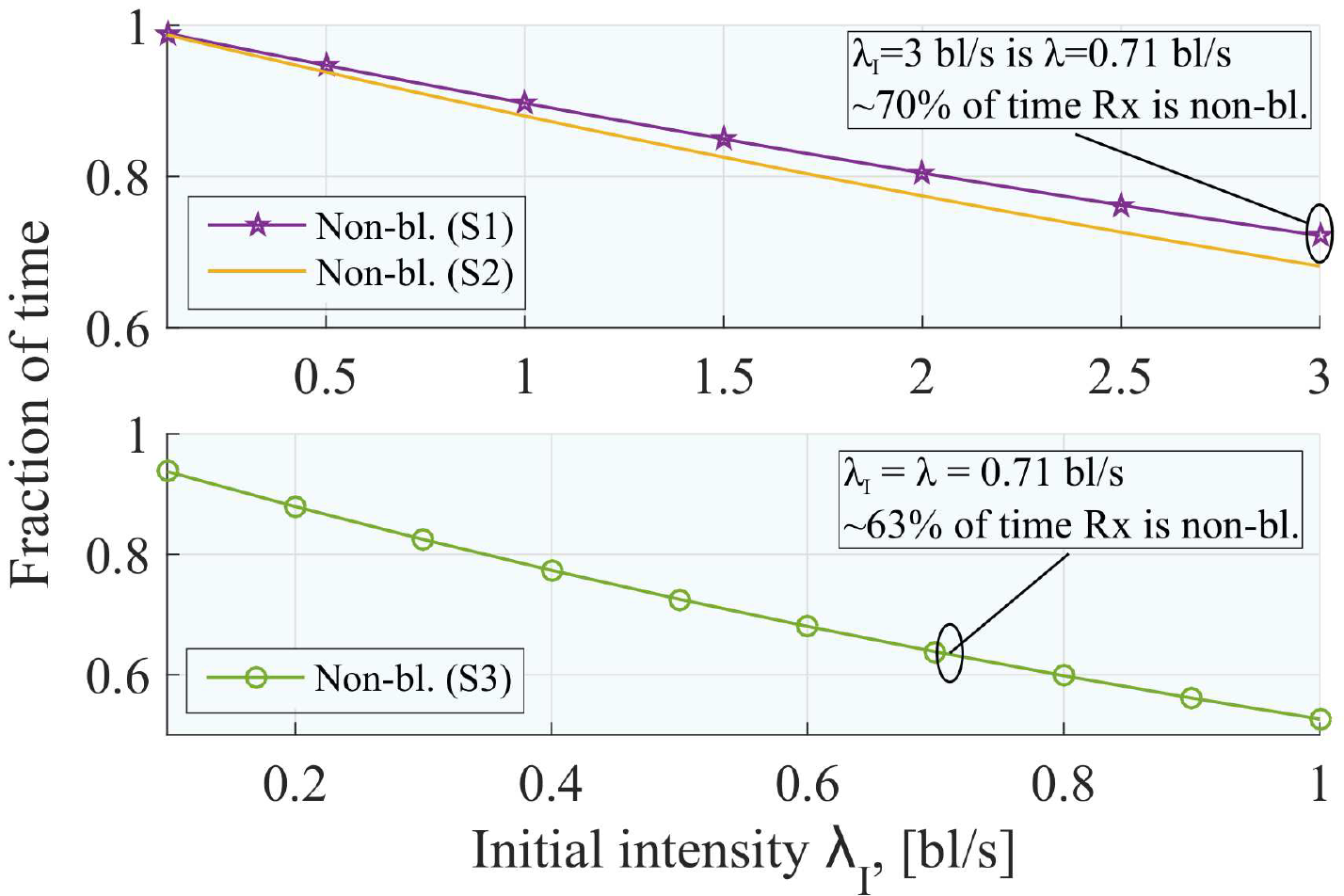}
\vspace{-0.2cm}
\caption{Fraction of time in non-blocked/blocked state as a function of $\lambda_I$.}
\label{fig:fraction}
\end{figure}

\begin{figure*}[t!]
\centering
\subfigure[\label{fig:mean_nLoS}{Mean time in blocked interval}]
{\includegraphics[width=.28\textwidth]{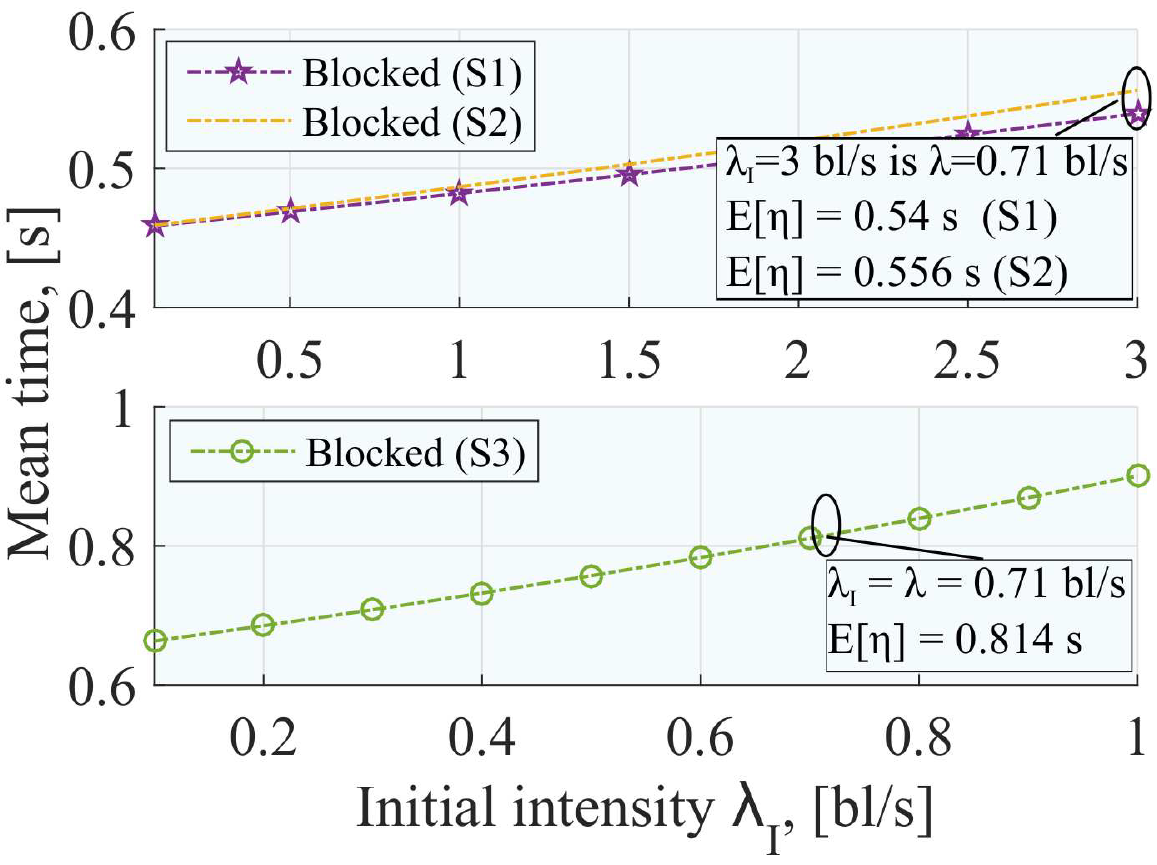}}~~~~~~~
\subfigure[\label{fig:mean_LoS}{Mean time in non-blocked interval}]
{\includegraphics[width=.28\textwidth]{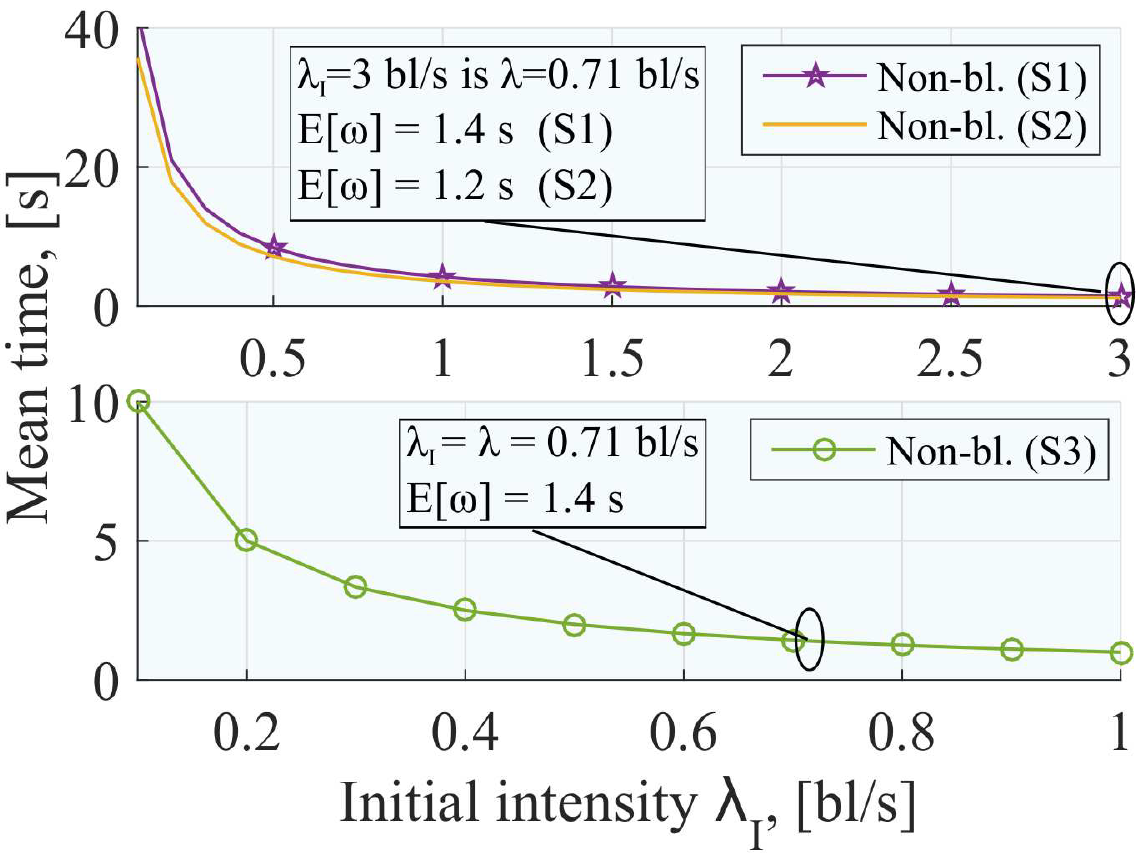}}~~~~~~~
\subfigure[\label{fig:resid_nlos}{CDF of residual blocked time}]
{\includegraphics[width=.28\textwidth]{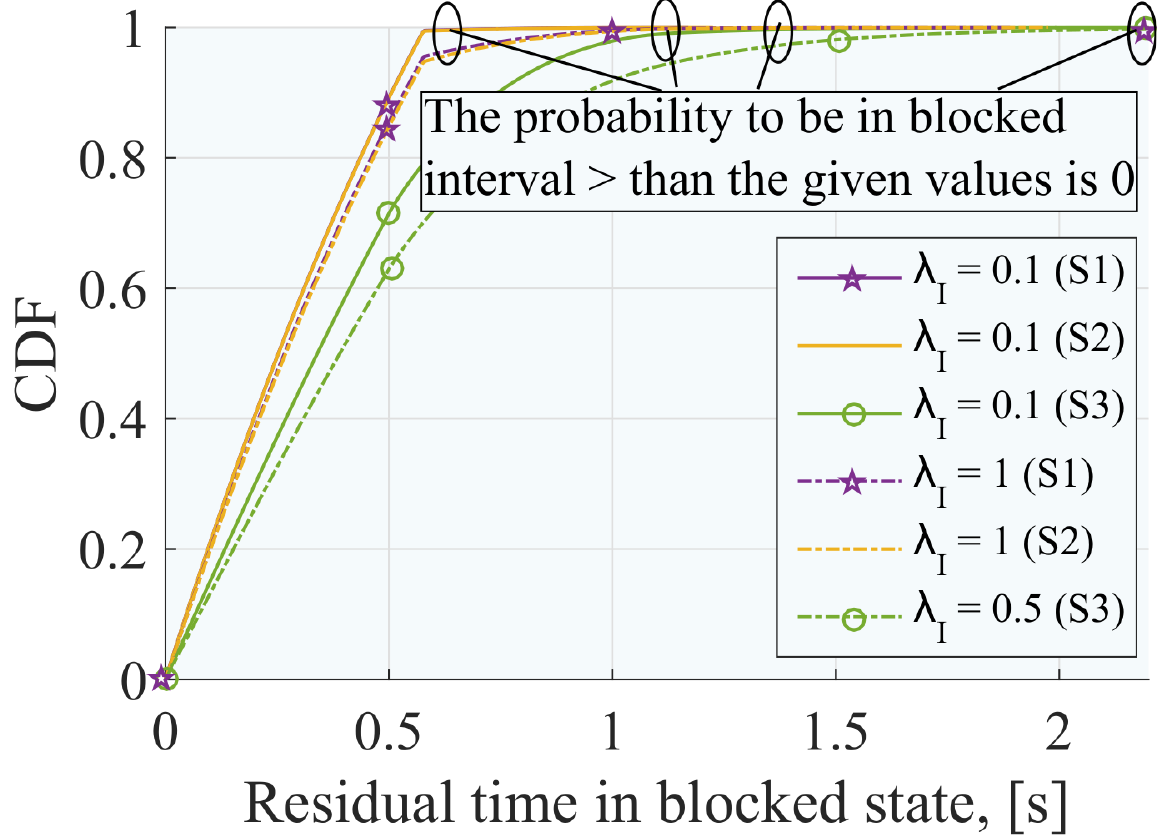}}
\centering
\subfigure[\label{fig:resid_los}{CDF of residual non-blocked time}]
{\includegraphics[width=.28\textwidth]{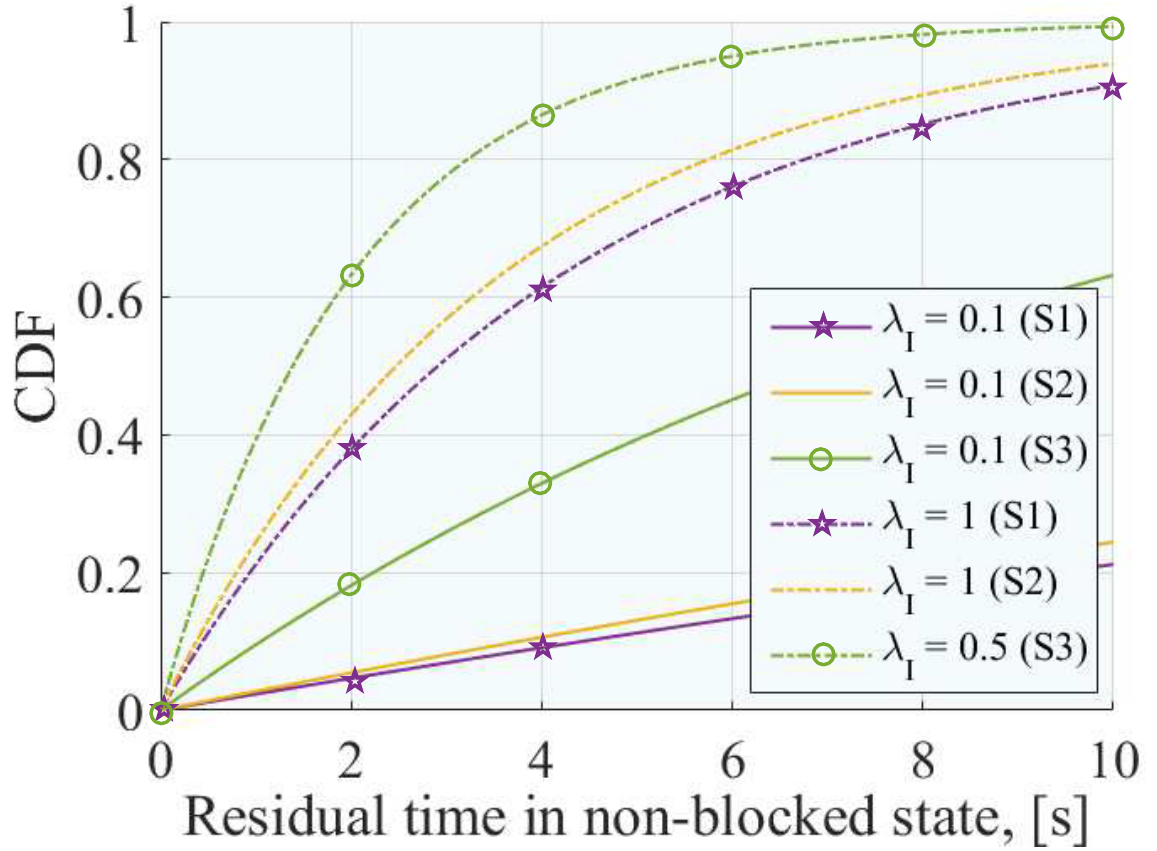}}~~~~~~~
\subfigure[\label{fig:P10_P11}{Non-blocked/blocked state conditioned on blocked state at $t_0$}]
{\includegraphics[width=.28\textwidth]{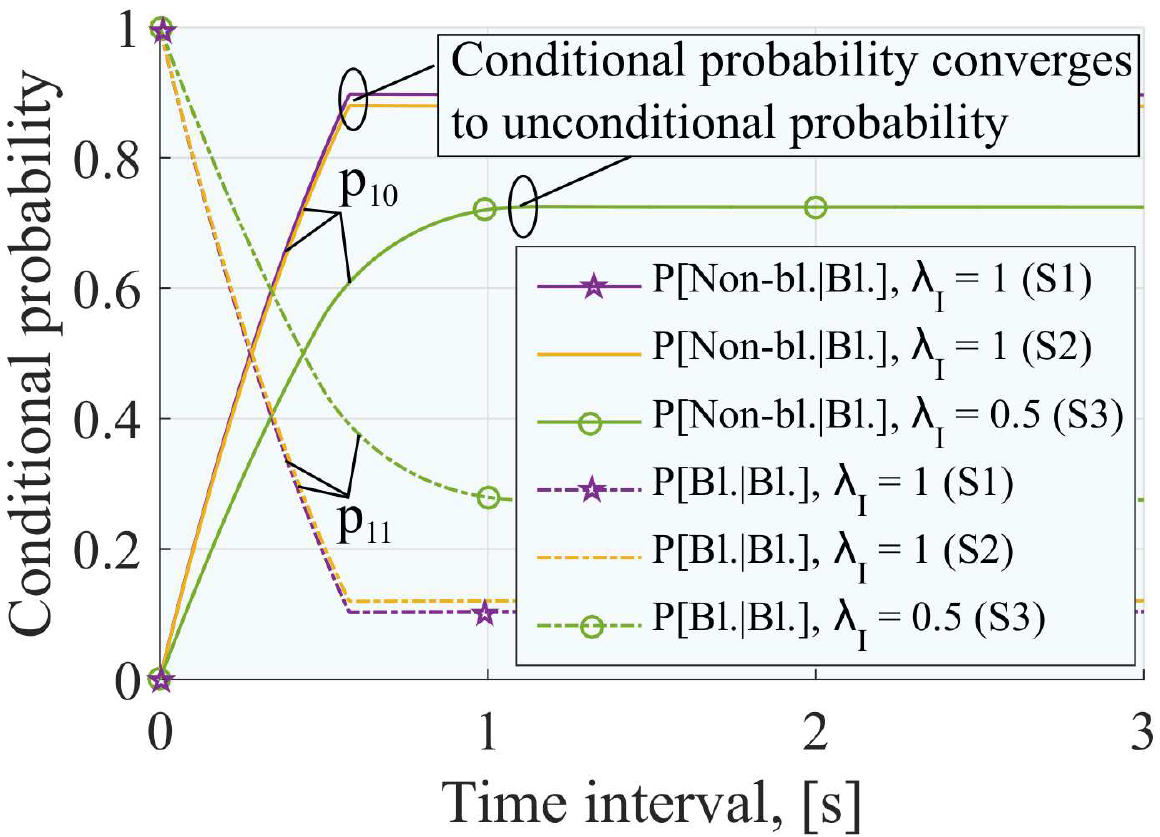}}~~~~~~~
\subfigure[\label{fig:P00_P01}{Non-blocked/blocked state conditioned on non-blocked state at $t_0$}]
{\includegraphics[width=.28\textwidth]{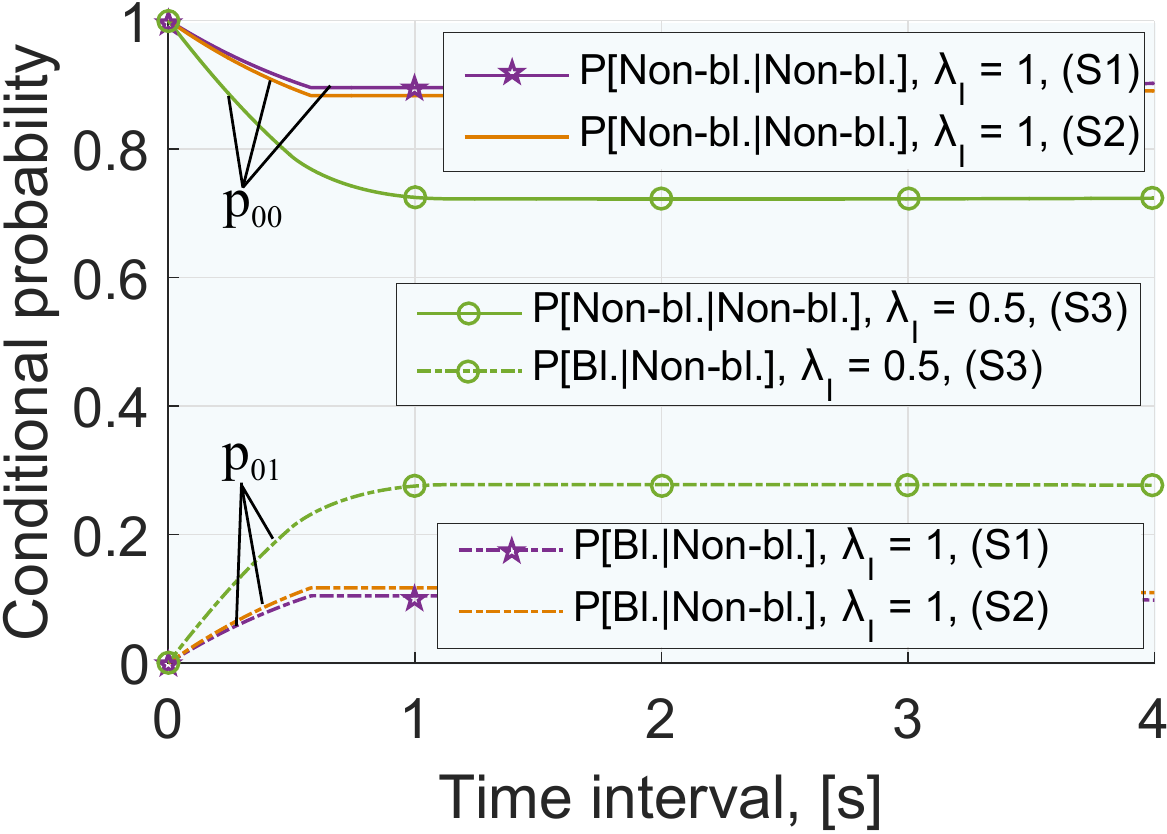}}
\caption{Mean and residual time in non-blocked/blocked state and conditional probability of non-blocked/blocked state given that Rx is non-blocked/blocked.}
\vspace{-0.4cm}
\label{fig:selectedPars}
\end{figure*}

Here, close match between the analytical and the simulation results is also clearly visible. Small discrepancy between simulation and analysis is caused by the specifics of the analytical model. Particularly, in simulations the LoS blockage zone is explicitly modeled by taking into account the circular nature of the blocker. In the developed mathematical model, the LoS blockage zone is assumed to be of rectangular shape thus neglecting the curvature caused by the blocker, see Section~\ref{sec:model} for details. Even though one could extend the model to the case of more complex geometry of the LoS blockage zone thus leading to more complex expressions, the resulting error of approximation by a rectangle is negligible. Also, note that the steep behavior of the CDF around $0.5$~s is explained by the fact that for these particular environmental parameters and intensity of blockers most of the busy periods on the associated M/GI/$\infty$ queue are caused by a single blocker.

%Finally, we specifically note that the time spent in the blocked state is not a memoryless process.
\vspace{-0.4cm}
\subsection{Numerical Analysis}

% Fractions

Further, the response of the blockage-related metrics to the selected ranges of input mmWave system parameters is analyzed. It should be mentioned that the choice of values for the parameters, and especially the Tx-Rx distance, is according to the need to compare all three mobility models. However, all parameters are adjustable during the computation if needed. Therefore, the key performance indicator in the deployment of interest is considered, namely, the fraction of time spent in the non-blocked state as illustrated in Fig.~\ref{fig:fraction}. It is a function of the arrival intensity of blockers, $\lambda_{I}$, for all the three scenarios under study. It should be noted that the fraction of time in blocked state is the complement of the fraction of time in the non-blocked state. The parameters for scenarios that are collected in the plot are shown in Table \ref{tab:sim_parameters}.

% Fractions

For the purposes of a numerical comparison, consider the initial intensities for the first (S1) and the second scenario (S2) to be equal to $1$ and $3$ blockers per second, respectively. This corresponds to the following intensities of entering the LoS zone: $0.24$ and $0.71$ blockers per second. The initial arrival intensity is equal to the intensity of entering the LoS zone for the third scenario (S3). Clearly, as the arrival intensity of blockers grows, the fraction of time spent in the non-blocked/blocked state decreases/increases correspondingly. The main observation here is that the resulting trend is close to linear. One may notice further that for the arrival intensity of $0.24$ blockers per second the fraction of time spent in the non-blocked state is almost the same for all the three scenarios. As the arrival intensity increases and approaches the value of $0.71$, the difference between the first two scenarios and the third scenario becomes more considerable.

\begin{figure*}[!t]
\centering
    \subfigure[Varying the cell radius]
    {
	  \includegraphics[width=0.4\textwidth]{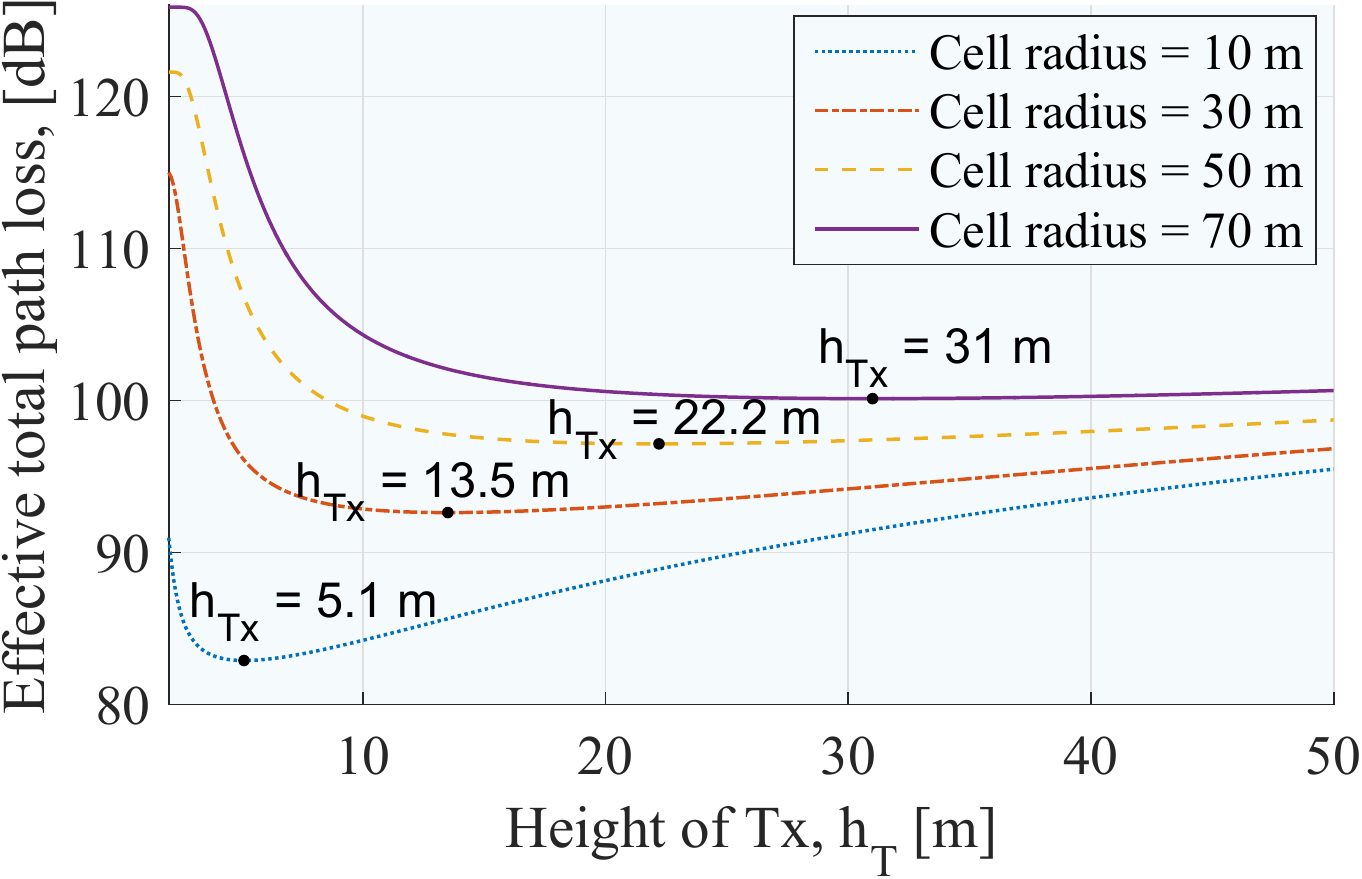}
        \label{fig:height}
    }~~~~~~~~~
   \subfigure[Varying the arrival intensity of blockers]
    {
	  \includegraphics[width=0.4\textwidth]{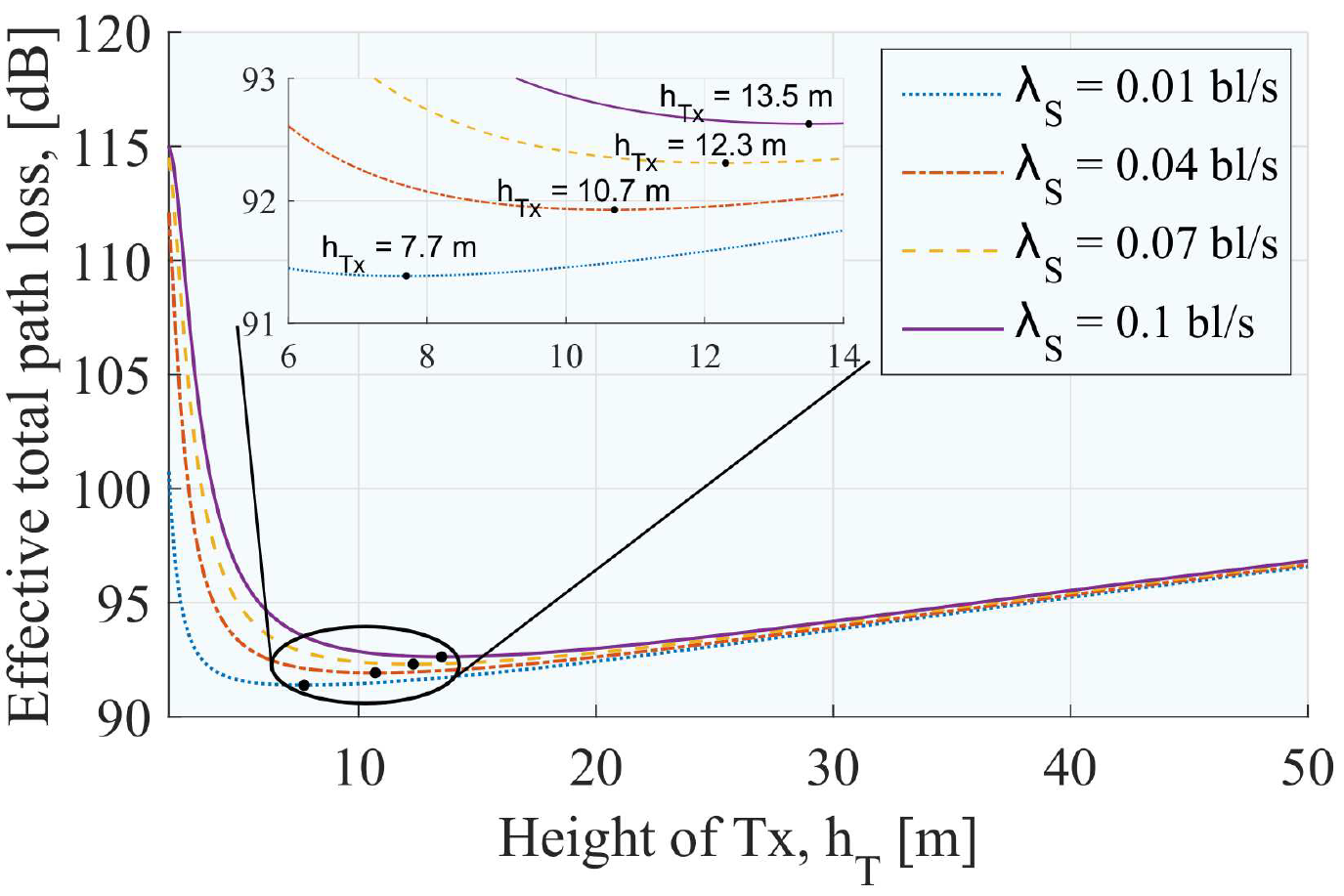}
	  \label{fig:height1}
    }
    \caption{Optimal height of the mmWave AP vs. cell radius and arrival intensity of blockers for the third scenario, S3.}
    \label{fig:height_}
    \vspace{-2mm}
\end{figure*}

% Absolute values in LoS/nLoS

Fig.~\ref{fig:mean_nLoS} and~\ref{fig:mean_LoS} report on the absolute values of the mean time spent in the non-blocked/blocked state for the same input parameters. As one may observe, the mean time spent in the non-blocked state decreases significantly as $\lambda_{I}$ increases from $0$ to $3$. However, the difference between the three scenarios for the two selected intensities is not significant. The mean time spent in the blocked state is the longest for the third scenario. It may be explained by the fact that the possible walking distance of a blocker is higher in this third scenario, since the blocker can move closer to the diagonal of the rectangle. Note that the blocked state behavior does not change drastically over a wide range of the considered blocker intensities. This is because even for higher intensities of blockers the blocked intervals are likely to feature only a single blocker occluding the LoS. %This behavior is of special interest for prospective resource scheduling and allocations systems.

% Applications of means/fractions

The mean time spent in the non-blocked/blocked state together with the associated fractions produce a direct implication on the dimensioning of mmWave systems. More specifically, using an appropriate propagation model, such as the one presented in~\cite{rappaport}, as well as accounting for the set of the modulation and coding schemes, one can evaluate the average throughput of a user located at a certain distance from the mmWave AP over a particular time slot. Given a certain value of the target mean data rate at the input, this information can be used further for determining the optimal coverage of a single mmWave AP. A close match with the result in \cite{Jacob_10} in terms of the mean time of LoS link blockage under the corresponding values of parameters is noted.

% Residual time

In Fig.~\ref{fig:resid_nlos} and~\ref{fig:resid_los}, the CDFs of the residual time in the blocked/non-blocked state are shown. As one may observe, the probability for the time spent in the blocked interval to exceed the blocker's mean residence time is rather small. For example, the mean time in the blocked interval for the first scenario with $\lambda_I=1$ bl/s is about $0.5$ s and the CDF $F_{t_{\eta}}(t<0.5)=0.9$ approximately. This fact implies that for a wide range of the considered intensities, in most cases, the blocked interval coincides with the residence time of a single blocker. Therefore, a user enters the non-blocked state after a certain time interval, which mainly depends on the size and the speed of the blocker, and less so on the arrival intensity of the blockers (note that the mean time in the blocked interval, Fig.~\ref{fig:mean_nLoS}, and the CDF of the residual time in the blocked state, Fig.~\ref{fig:resid_nlos}, do not change drastically with increasing arrival intensity of blockers). Generally, knowing that the Rx is in the blocked interval, one can estimate the remaining time in this period. This may reduce the amounts of signaling information required for tracking the state of mmWave receivers. Also, the shape of the CDF curves for the residual time in the blocked interval is explained by the particular behavior of the CDF of time for a single blocker movement inside the LoS blockage zone, which has a distinct plateau.

% Conditional probabilities

%The probability to reside in the non-blocked/blocked state after a certain time interval is of special importance. For the time period that a user is blocked, the mmWave AP scheduler may be instructed to serve other users and only resume serving the blocked user when its non-blocked probability increases.
Fig.~\ref{fig:P10_P11} and~\ref{fig:P00_P01} illustrate the behavior of the conditional probability to be in the non-blocked/blocked state at time $t_{1}$ given that the Rx was in the non-blocked/blocked state at time $t_0=0$, $t_{1}>t_{0}$. Due to the long average time in the non-blocked state as compared to the average time in the blocked state, the probability to change the state from non-blocked to blocked is rather small for the considered values of $t_{1}$. In contrast, the probability to become non-blocked given that the Rx was blocked at time $t_0$ increases significantly. After that, the conditional probability converges to the unconditional one and the process in question ``loses'' its memory.

%---------------------------------------------------------------------------------------------------------------------------------------------------------------------------------------------------------------------------------------------------------------------------------%
\section{Example Applications of The Methodology}\label{sect:apps}

This section first summarizes two important analytical results stemming from the direct application of the proposed methodology. Then, the achievable performance gains in terms of the computation complexity are demonstrated after applying the model for system-level evaluation of mmWave systems.

\subsection{Optimal Height of the mmWave AP}

Let us first determine the height of the mmWave AP, such that the average path loss to the user is minimal. To this end, the blocker mobility model to estimate the fraction of time in the non-blocked state as a function of $h_{T}$ is utilized, and then the mmWave propagation model from~\cite{rappaport} to characterize the mean path loss as a function of $h_{T}$ is applied.

The average path loss can thus be established as in~\cite{gapeyenkoICC}
\begin{align}\label{eqn:1}
L_{e}=\mathbb{E}[T_l]L_{LoS} + (1-\mathbb{E}[T_l])L_{nLoS},
\end{align}
where $\mathbb{E}[T_l]$ is the fraction of time that the Rx spends in the non-blocked state, which has been derived in (\ref{eqn:56}), while $L_{LoS} = 61.4 + 20\log_{10}(d)$ and $L_{nLoS}=72 + 29.2\log_{10}(d)$ are the path loss values for the LoS and the nLoS components for 28~GHz as obtained in~\cite{rappaport} and $d=\sqrt{(h_T -h_R)^{2} + r_{0}^{2}}$ is the three-dimensional  distance between Tx and Rx.

\begin{figure*}[!t]
\centering
    \subfigure[Varying the cell radius and blocker arrival intensity]
    {
	  \includegraphics[width=0.41\textwidth]{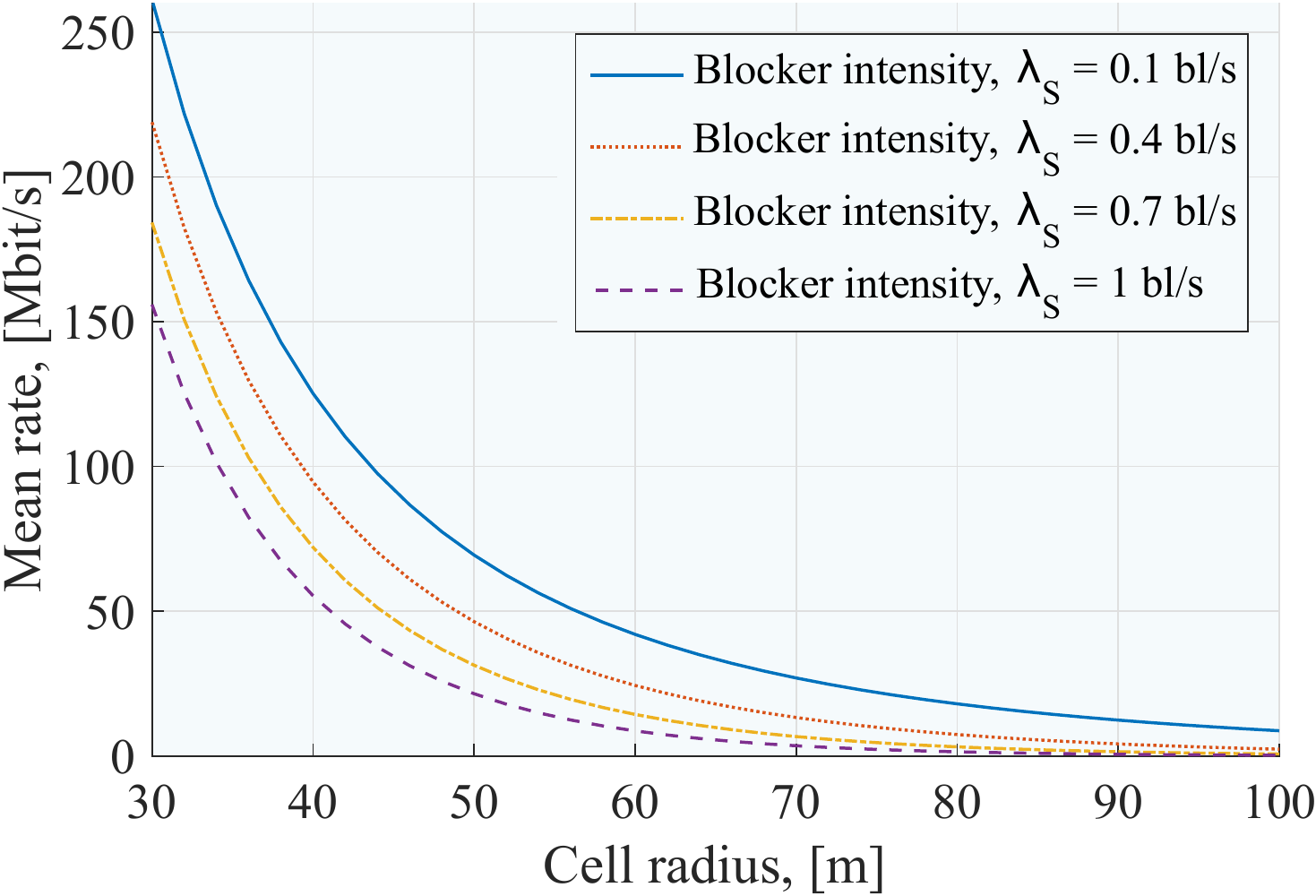}
	  \label{fig:rate}
    }~~~~~~~~~
   \subfigure[Varying the cell radius and user density]
    {
	  \includegraphics[width=0.41\textwidth]{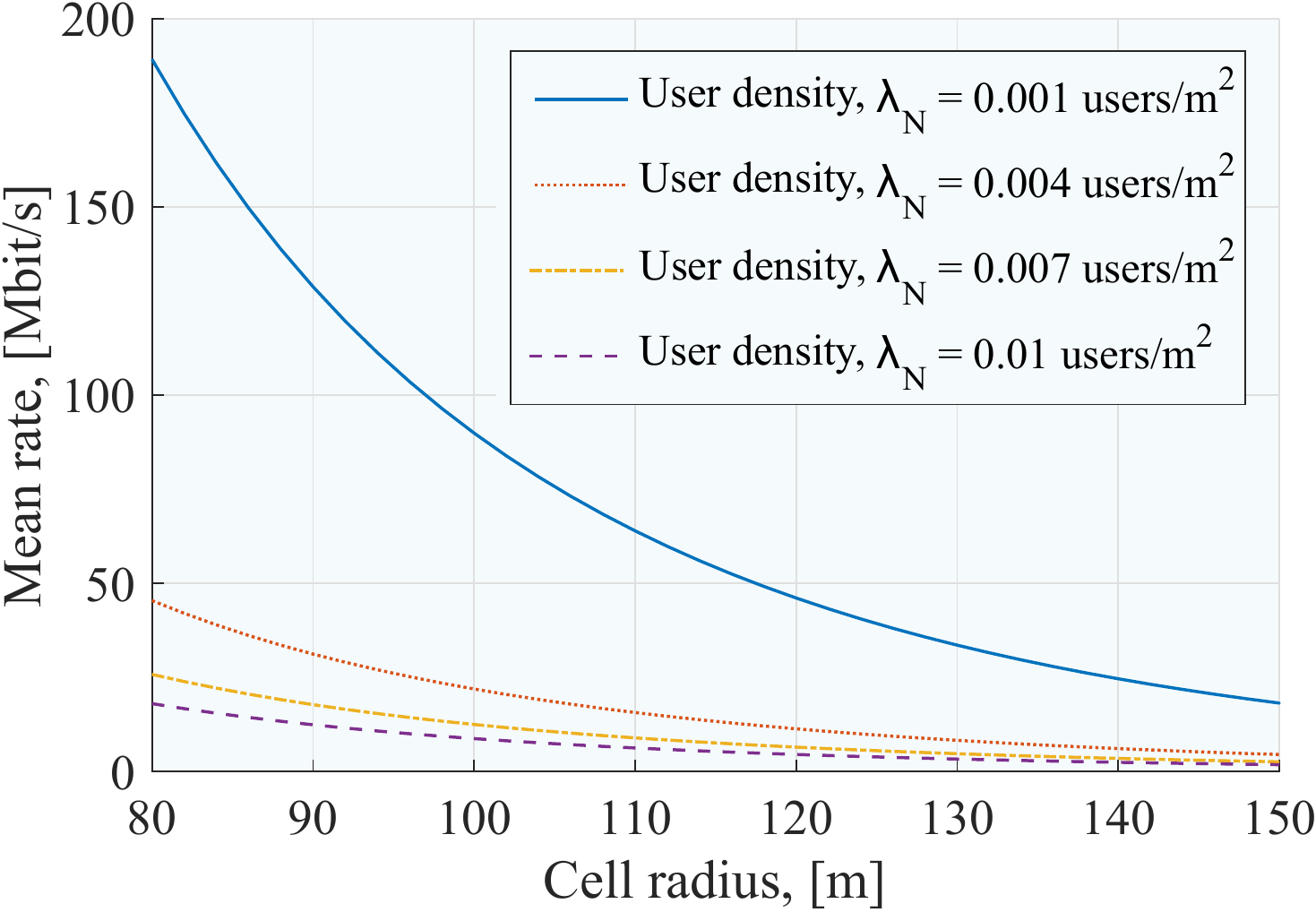}
	  \label{fig:rate1}
    }
    \caption{Mean data rate of the user at the cell edge vs. cell radius, user densities, and blocker arrival intensities for the third scenario, S3.}
    \vspace{-0.3cm}
    \label{fig:rate_}
\end{figure*}

For any value of the arrival intensity of blockers, the optimal height of the mmWave AP within the range of reasonable values can now be established by utilizing the graphical approach and plotting (\ref{eqn:1}) to identify the value minimizing the average path loss at the cell edge (two-dimensional distance between Tx and Rx, $r_0$, is equal to the cell radius in that particular problem). The same could be derived numerically by taking a derivative of the average path loss from (\ref{eqn:1}). To vary the arrival intensity of blockers that enter the LoS blockage zone proportionally to its dimensions, it is assumed that $\lambda_S=0.1$~bl/s is the arrival intensity of blockers crossing the unit square area. The intensity of blockers entering the LoS blockage zone can then be written as $\lambda = \lambda_S r d_m$.

Fig.~\ref{fig:height} demonstrates the optimal Tx height for different cell radius values in the third scenario. Here, the constant arrival intensity of blockers is set to $\lambda_S=0.1$~bl/s, while the remaining parameters are given in Table~\ref{tab:sim_parameters}. As one may learn, an increase in the cell radius requires the mmWave Tx to be deployed higher in order to achieve the optimized propagation conditions at the cell edge. Further, Fig.~\ref{fig:height1} shows the optimal height of the Tx for a fixed cell radius of $30$~m and different intensities of blockers that enter the unit square area of the LoS blockage zone, $\lambda_S$. It could be noticed that with the growing blocker arrival intensity the optimal height of the mmWave Tx increases as well. This effect is explained by the fact that the probability of residing in the non-blocked state decreases; hence, one needs to increase the height of the Tx to maximize the fraction of time spent in the non-blocked state.

The impact of the cell radius and intensity of blockers on the optimal height of the AP is summarized as follows:
\begin{itemize}
	\item{The optimal height of the AP from the range of realistic values highly depends on the cell radius, e.g., after increasing the cell radius by 7 times the optimal height grows by approximately 6 times.}
  \item{The impact of the intensity of blockers on the optimal height is less pronounced. For example, after increasing the intensity of blockers by 10 times the optimal height grows by only 1.7 times.}
\end{itemize}

\subsection{Cell Range Analysis}

Another direct application of the proposed model is to determine the maximum coverage range of the mmWave AP, such that a certain average data rate is delivered to all of the users. The latter can be achieved by ensuring that the user data rate at the cell edge is higher than the required minimum.

Assume a Poisson field of users in $\Re^{2}$ with the density of $\lambda_N$ users per square unit. Let $x$ be the intended radius of the mmWave coverage zone. The number of users in this covered area follows a Poisson distribution with the parameter $\lambda_N\pi{}x^{2}$. The traffic model is considered here to be ``full buffer'', that is, the mmWave system is observed in the highly-loaded conditions. Further, the maximum radius $x$ is determined, such that the capacity of at least $k$ Mbps is provided to each user. The bandwidth of the mmWave system, $B$, is allowed to be infinitesimally divisible. For simplicity, an equal division of bandwidth between all of the users is considered, even though any reasonable resource allocation mechanism can in principle be assumed, e.g., max-min or proportional fair~\cite{walrand2013congestion}.

The capacity delivered to the mmWave Rx located at $x$ can be derived as
\begin{align}
R(x)=cB_{i}\log[1+S(x)],
\end{align}
where $B_{i}$ is the bandwidth made available to the user of interest, $S(x)$ is the average signal-to-noise ratio (SNR) at this user, and $c$ is a constant accounting for imperfections of the modulation and coding schemes. In what follows, $c=1$ is taken for simplicity.

Since the radio resource in the system is assumed to be distributed equally between all of the users, the bandwidth share actually available\footnote{Note that instead of equal division of the bandwidth, more sophisticated resource allocation strategies can be enforced providing a certain degree of trade-off between fairness of per-user rates and aggregate system capacity, e.g., max-min, proportional fairness, weighted $\alpha$-fairness, see \cite{yaacoub2012survey}.} to the Rx located at $x$ is $B_{i}=B/N$, where $N$ is a discrete random variable (RV) having a Poisson distribution with the density of $\lambda_N\pi{}x^{2}$ per considered area of interest. To obtain the SNR $S(x)$, the mmWave propagation model in~\cite{rappaport} is employed by defining as $s_{0}(x)$ the SNR associated with the LoS state and as $s_{1}(x)$ the SNR associated with the nLoS state. The aggregate SNR is a two-valued discrete RV taking the values of $s_{i}(x)$, $i=0,1$, with the probabilities corresponding to the fraction of time spent in the non-blocked ($\mathbb{E}[T_l]$) and blocked ($1-\mathbb{E}[T_l]$) state, respectively. The RVs $B_{i}$ and $S(x)$ are independent and their joint probability mass function (pmf) is derived as the product of the individual pmfs. 

Once this joint pmf is obtained, one may proceed with determining the mean capacity $R(x_c)$ that is provided to a user located at the cell edge $x_c$ as
\begin{align}
\mathbb{E}[R]&=\sum_{N=1}^{\infty}\frac{(\lambda_N\pi{}x_{c}^{2})^N(e^{-\lambda_N\pi{}x_{c}^{2}})}{N!}\nonumber\\
&\times\Big(\mathbb{E}[T_l]c\frac{B}{N}\log[1+s_{0}(x_{c})] \nonumber\\
&+(1-\mathbb{E}[T_l])c\frac{B}{N}\log[1+s_{1}(x_{c})]\Big),
\end{align}
which can be evaluated numerically.

\begin{figure}[t!]
\vspace{-0.2cm}
\centering
\includegraphics[width=0.9\columnwidth]{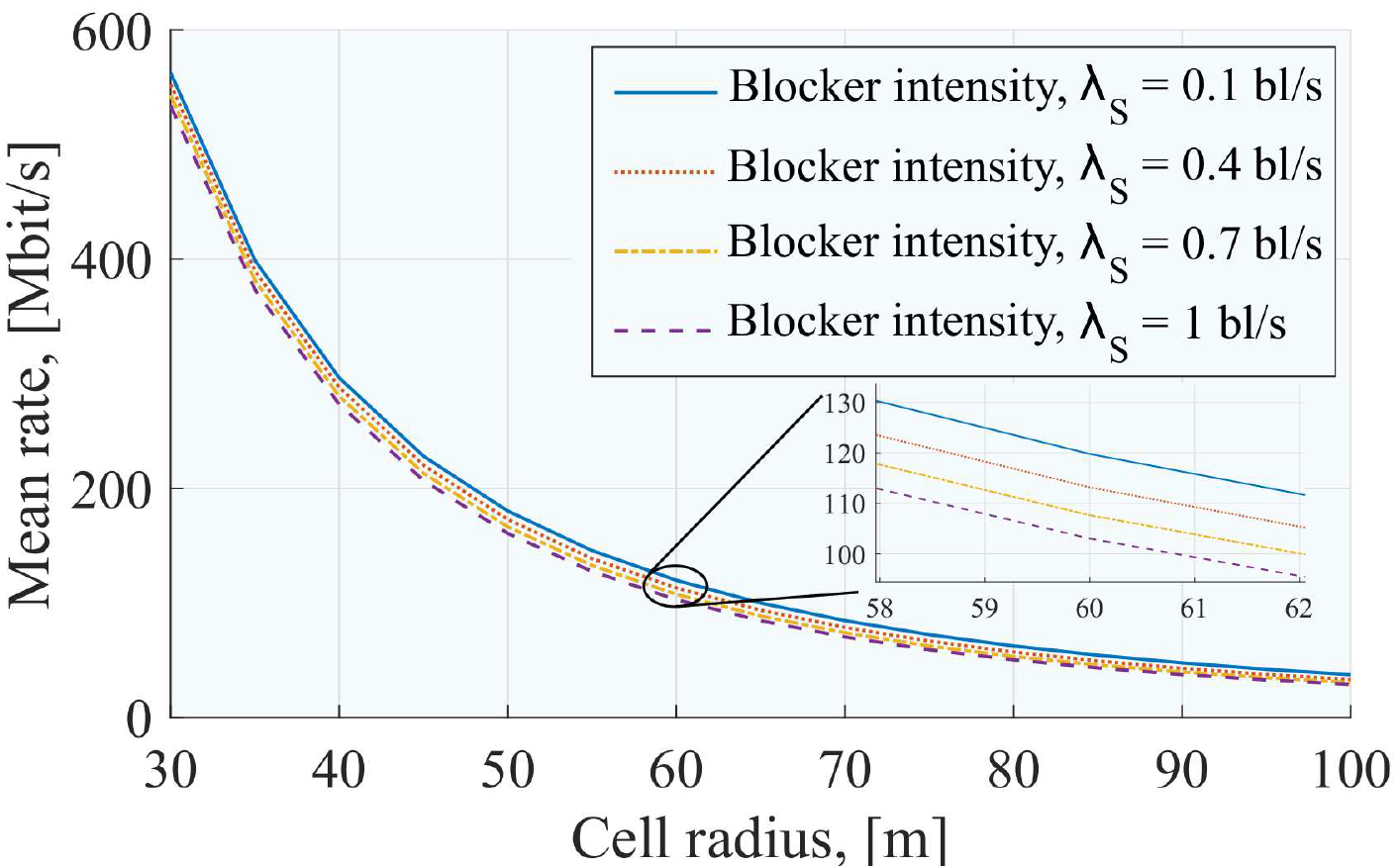}
\vspace{-0.2cm}
\caption{Mean data rate across the cell vs. cell radius, user, and blocker arrival intensities for the third scenario, S3.}
\vspace{-0.2cm}
\label{fig:rate_cell}
\end{figure}

The mean capacity made available to a user located at the cell edge $x_{c}$ is reported in Fig.~\ref{fig:rate_} for different user and blocker intensities. It is a function of the cell radius as well as the user and blocker intensities. The rest of the parameters are collected in Table~\ref{tab:sim_parameters}. In Fig.~\ref{fig:rate}, the density of the users is set as $\lambda_N = 0.01$~users/m$^2$, and the height of the mmWave Tx is assumed to be $h_T = 10$~m. As an example, these plots correspond to the third scenario of interest. As one may observe, the mean data rate decreases as the cell range and/or the arrival intensity of blockers grows. It should be noted that for equal density of users and cell radius, increase in the density of human blockers leads to the drop in the mean data rate. Provided with a particular target data rate, one may use Fig.~\ref{fig:rate_} to estimate the maximum cell radius for given blocker and user intensities, such that the chosen data rate is made available to all of the mmWave users.

In addition to the above, the analytical formulation for the mean data rate $R(x)$ of each user in the cell is given as
\begin{align}
\mathbb{E}[R]&=\sum_{N=1}^{\infty}\frac{(\lambda_N\pi{}x^{2})^N(e^{-\lambda_N\pi{}x^{2}})}{N!}c\frac{B}{N} \nonumber\\
&\times\int_{0}^{x_c}\Big(\mathbb{E}[T_l(x)]\log[1+s_{0}(x)] \nonumber\\
&+(1-\mathbb{E}[T_l(x)])\log[1+s_{1}(x)]\Big)f_{X_U}(x)dx,
\end{align}
where $f_{X_U}(x)=2x/x_{c}^{2}$ is the pdf of the distance $X_U$ between Tx and Rxs uniformly distributed in the cell of the radius $x_c$, while the fraction of time in non-blocked state is obtained by using (\ref{eqn:56_}) and (\ref{eqn:55_}) as
\begin{align}
\mathbb{E}[T_{l}(x)]=\frac{1}{\exp(\lambda \mathbb{E}[T])},
\end{align}
where $\mathbb{E}[T]$ is the mean residence time.

The mean rate of a randomly selected user in the cell is illustrated in Fig.~\ref{fig:rate_cell}. As one may observe, increasing the arrival intensity of blockers entering the unit area of the LoS blockage zone, $\lambda_S$, does not drastically affect the mean rate of a randomly selected user, as opposed to the mean rate at the cell edge.

The main points of the cell range analysis are summarized below:
\begin{itemize}
	\item{It is demonstrated that higher intensity of blockers decreases the mean rate of the user at the cell edge. By increasing the cell radius, the impact of the intensity of blockers becomes stronger, e.g., at the cell edge of 30 m the mean rate of a user is decreased by 1.7 times when the intensity grows by 10 times. However, when the cell radius is 100 m, the mean rate is 30 times lower.}
  \item{The mean rate of an arbitrarily chosen user decreases with the increased intensity of blockers. However, the effect of cell coverage on the mean rate is rather limited for all the considered intensities, e.g., 10 times higher intensity of blockers in the cell of radius 30~m decreases the mean rate by a factor of 1.05, whereas the same increase for 100~m radius cell decreases the mean rate by 1.3.}
\end{itemize}

\vspace{-0.2cm}
\subsection{System-Level Simulation Complexity}

Today, the performance of complex wireless systems is primarily assessed within large-scale system-level simulation (SLS, see e.g.,~\cite{Win2014}) environments. The proposed mathematical model can be efficiently utilized as part of an SLS tool to substitute for the need to explicitly model the blockage process. This may drastically improve the simulation run times, especially in highly crowded urban scenarios.

\begin{table}[!b]
\centering
\vspace{-0.2cm}
\caption{Absolute run time measurements in SLS evaluation, s.}
\begin{tabular}{|l|l|l|l|l|l|l|}
\hline
\multicolumn{2}{|c|}{\diagbox{$T_u$, ms}{$\lambda_I$, bl./s.}} & 0.1 & 0.3 & 0.5 & 0.7 & 1.0\\
\hline \hline
\multirow{2}{*}{100} & simulation & 0.101 & 0.215 & 0.532 & 0.860 & 0.928 \\
 \hhline{~------}
       & analysis & 0.250 & 0.272 & 0.243 & 0.256 & 0.292 \\
 \hline

\multirow{2}{*}{70} & simulation & 0.208 & 0.290 & 0.820 & 1.356 & 2.801 \\
 \hhline{~------}
       & analysis & 0.342 & 0.398 & 0.287 & 0.316 & 0.351 \\
 \hline

\multirow{2}{*}{50} & simulation & 0.581 & 1.012 & 1.91 & 3.282 & 5.982 \\
 \hhline{~------}
       & analysis & 0.681 & 0.538 & 0.369 & 0.694 & 0.499 \\
 \hline

\multirow{2}{*}{10} & simulation & 1.211 & 3.921 & 5.867 & 7.119 & 10.92\\
\hhline{~------}
      & analysis & 2.968 & 3.690 & 1.762 & 2.774 & 2.104 \\
 \hline

 \multirow{2}{*}{1} & simulation & 10.28 & 21.40 & 54.91 & 78.92 & 1.24e2\\
 \hhline{~------}
    & analysis & 22.64 & 18.18 & 23.98 & 19.09 & 22.11 \\
 \hline
\end{tabular}
\vspace{-0.3cm}
\label{tab:comp_complex}
\end{table}

% Simulation description

In Table~\ref{tab:comp_complex}, the computation complexity measurements is reported in terms of the SLS run time as a function of the blocker arrival intensity and the \textit{environment update interval}, $T_U$ for the two cases: (i) direct simulation of the blockage process and (ii) application of the proposed model. From the SLS perspective, the environment update interval corresponds to how frequently the state of the users is monitored. In the latter case, each user has been associated with the pdf of the non-blocked and blocked intervals, thus implying that there is a need to update its state whenever the said interval expires. In the former direct modeling approach, at each environment update interval, one has to re-estimate the state of the users by employing the straightforward geometry considerations. Doing so significantly increases the computation complexity of the SLS evaluation, especially in dense environments. The experiments were conducted for the following parameters: the distance between the mmWave AP and the user is $r_{0}=10$~m and the blocker speed is $V=1$~m/s. The simulation time was set to $50$~s, while the hardware parameters were Intel Core i7-6700HQ CPU, 2.60~GHz (1 core run), and 32~GB RAM.

% Discussion of results

As it can be established by analyzing Table~\ref{tab:comp_complex}, the complexity of both simulation and analysis grows as the environment update interval decreases. Even though the simulation run time does not depend on the blocker density (nor on the distance between the mmWave AP and the Rx), the SLS modeling complexity increases with a higher number of blockers. This is because the computation complexity is associated with the need to characterize an intersection with every blocker to determine the current state of each user. For the SLS results reported in this work, the blockers are deployed at the edge of the modeled scenario and move across the street. From the simulation perspective, computation complexity grows linearly as blocker arrival intensity increases, i.e., the overall modeling complexity is $O(n)$. Although the use of sophisticated techniques, such as spatial hashing, may reduce the complexity down to $O(log(n))$ at the expense of more cumbersome implementation, the resulting complexity would still grow rapidly for higher user densities. In stark contrast, with the proposed analytical modeling, the complexity remains constant at $O(1)$. Finally, with the decreasing update interval $T_U$, both analytical and simulation complexity grow linearly ($O(n)$). However, it may not be as important, because the value of $T_U<$1~ms is seldom used in practical systems.

%---------------------------------------------------------------------------------------------------------------------------------------------------------------------------------------------------------------------------------------------------------------------------------%
\vspace{-0.2cm}
\section{Conclusions}\label{sec:conclusions}

% What we've done here and main characteristics of the model

This work is aimed at a systematic characterization of the effects caused by the LoS blockage in cellular mmWave systems in presence of mobile blockers. To this end, three representative urban scenarios -- as discussed in the current 3GPP specifications -- were considered. The underlying process in the proposed mathematical approach was shown to be of alternating renewal nature, where non-blocked periods interchange with blocked intervals. The distribution of the non-blocked intervals was characterized by a simple memoryless exponential formulation, while the blocked periods were established to follow a general distribution. As example applications of the model, the height optimization of the mmWave AP, the mmWave cell range analysis, and the system-level modeling complexity reduction were considered.
 
Relying on the developed mathematical methodology, the impact of the LoS blockage is analyzed by establishing that the mean time in the blocked state is around 400-1000 ms for the typical input parameters, which amounts to a significant number of mmWave cellular superframes (around 20-50 according to \cite{nokia}). Moreover, a strong temporal correlation for the timescales of interest in mmWave systems (i.e., less than about a second) was demonstrated. The contributed temporal analysis could be useful for modeling human body blockage in the mmWave-specific system-level evaluation tools as well as when designing the mmWave-centric communication protocols.

\vspace{-0.2cm}
\begin{appendices}

\section{Residence Time in the LoS Blockage Zone}\label{appendix:appendix_A}

Here, the CDFs of the residence time in the LoS blockage zone is derived for the sidewalk 2 and the park/square/stadium scenarios (see Fig.~\ref{fig:analytic}).

\textit{Second scenario, S2.} Consider the sidewalk 2 scenario. Here, the main difference as compared to the sidewalk 1 scenario is in that the users tend to move closer to the central lane of the street. We model this effect by using the triangular distribution with the following CDF
\begin{align}\label{eqn:coord5}
F_Y(x)=
\begin{cases}
0, \qquad{}\qquad{}\qquad{}\qquad{}x \leq 0, \\
\displaystyle{\frac{x^2}{w_S c}}, \qquad{}\qquad{}\qquad{}\,\,\,\,0 < x \leq c, \\
\displaystyle{1 - \frac{(w_S - x)^2}{w_S (w_S - c)}},\,\,\,\,\,\,\,\,\,\, c < x \leq w_S, \\
1, \qquad{}\qquad{}\qquad{}\qquad{}x > w_S,
\end{cases}
\end{align}
where $c$ is the mode of the triangular distribution, which denotes the point with the highest probability density.

The distribution of distance, which is traveled by a blocker in the blockage zone, depends on the position of the LoS blockage zone with respect to the mode of the triangular distribution. The following five different cases are observed:
\begin{enumerate}
  \item{If $y_C \leq c$:}
\begin{align}\label{eqn:coord6}
F_{L}^{2,1}(x) =
\begin{cases}
0,\qquad{}\qquad{}x \leq 0, \\
\displaystyle{\frac{x \sin(2 \alpha)}{y_C - y_A}},\,\,\,0 < x \leq x_{min}, \\
1,\qquad{}\qquad{}x > x_{min}.
\end{cases}
\end{align}
  \item{If $y_C - y_{min} \leq c < y_C$, $y_{min} = x_{min} \cos(\alpha) \sin(\alpha)$:}
\begin{align}\label{eqn:coord7}
\hspace{-1.5em}F_{L}^{2,2}(x) =
\begin{cases}
0, x \leq 0, \\
\frac{\sin(2\alpha) (4 c (y_A + y_C) - w_S (x \sin(2\alpha) + 4(c+y_A)))}{4w_S (y_A^2 + c (c - 2 y_C)) + c (y_C^2 - y_A^2)}, \\
0 \leq x < \frac{y_C - c}{\sin(\alpha) \cos(\alpha)},\\
\frac{w_S (c - y_C)^2 + x \sin(2 \alpha) (c - w_S) (y_A + y_C)}{w_S (y_A^2 + c (c - 2 y_C)) + c (y_C^2 - y_A^2)}, \\
\frac{y_C - c}{\sin(\alpha) \cos(\alpha)} < x \leq x_{min}, \\
1, x > x_{min}.
\end{cases}
\end{align}
  \item{If $y_A + y_{min} \leq c < y_C - y_{min}$:}
\begin{align}\label{eqn:coord8}
F_{L}^{2,3}(x) =
\begin{cases}
0, x \leq 0, \\
\frac{x \sin(2 \alpha) (4 c (y_A + y_C) - w_S (x \sin(2 \alpha) + 4 (c + y_A)))}{4 (w_S (y_A^2 + c (c - 2 y_C)) + c (y_C^2 - y_A^2))}, \\
0 < x \leq x_{min}, \\
1, x > x_{min}.
\end{cases}
\end{align}
  \item{If $y_A \leq c < y_A + y_{min}$:}
\begin{align}\label{eqn:coord9}
F_{L}^{2,4}(x) =
\begin{cases}
0, x \leq 0, \\
\frac{x \sin(2 \alpha) (4 c (y_A + y_C) - w_S (x \sin(2 \alpha) + 4 (c + y_A)))}{4 (w_S (y_A^2 + c (c - 2 y_C)) + c (y_C^2 - y_A^2))}, \\
0 < x \leq \frac{c - y_A}{\sin(\alpha) \cos(\alpha)}, \\
\frac{w_S (c - y_A)^2 + x c \sin(2 \alpha) (y_A + y_C - 2 w_S)}{w_S (y_A^2 + c (c - 2 y_C)) + c (y_C^2 - y_A^2)}, \\
\frac{c - y_A}{\sin(\alpha) \cos(\alpha)} < x \leq x_{min}, \\
1, x > x_{min}.
\end{cases}
\end{align}
  \item{If $c \leq y_A$: }
\begin{align}\label{eqn:coord10}
F_{L}^{2,5}(x) =
\begin{cases}
0, x \leq 0, \\
\frac{x \sin(2 \alpha)}{y_C - y_A}, 0 < x \leq x_{min}, \\
1, x > x_{min}.
\end{cases}
\end{align}
\end{enumerate}

For the second scenario, the arrival intensity of the blockers that enter the LoS blockage zone is given by
\begin{align}\label{eqn:543}
\lambda = \lambda_I \Big(F_Y(y_{C})-F_Y(y_{A})\Big).
\end{align}

The above may be explained by the fact that the majority of blockers cross the width of the sidewalk in the middle by following the triangular distribution for the entry point. Therefore, the Poisson process in time has the arrival intensity of blockers emerging at the effective width per time unit, $w_E$, equal to $\lambda$ as derived in (\ref{eqn:543}).

\textit{Third scenario, S3.} Finally, for the park/stadium/square scenario, the CDF of distance walked by a blocker in the blockage zone is given by
\begin{align}\label{eqn:coord11}
\hspace{-0.5em}F_{L}^3(x) =
\begin{cases}
0,\,x \leq 0, \\
w_1 F_{L}^{3,1}(x) + w_2 F_{L}^{3,2}(x),\,\\
0 < x \leq \sqrt{d_{m}^{2} + r^2}, \\
1,\,x > \sqrt{d_{m}^{2} + r^2},
\end{cases}
\end{align}
where the weights $w_1$ and $w_2$ are the probability for a blocker to enter from the long side (AD or CB, see Fig.~\ref{fig:geometrical_scenario}) and to leave from another long side (AD or CB), and the probability for a blocker to enter from the short side (DC) and to leave from the long side (AD or CB), respectively, which are given by
\begin{align}
&w_1 = \frac{d_{m}^2 + 3 d_{m} r}{d_{m}^2 + 3 d_{m} r + 2 r^2},\nonumber\\
&w_2 = \frac{2 r^2}{d_{m}^2 + 3 d_{m} r + 2 r^2},
\end{align}
and the corresponding CDFs are
\begin{align}
F_{L}^{3,2}(x) =
\begin{cases}
0, \,x\leq d_{m}, \\
\frac{d_{m}^2 - x^2 + 2 r \sqrt{x^2 - d_{m}^2}}{r^2}, \,\\
d_{m} < x \leq \sqrt{d_{m}^2 + r^2}, \\
1, \,x > \sqrt{d_{m}^2 + r^2},
\end{cases}
\end{align}
and
\begin{align}\label{eqn:coord12}
&F_{L}^{3,1}(x) =
\begin{cases}
0, x \leq 0, \\
\frac{\pi x^2}{4 r d_{m}}, 0 < x \leq \min(r,d_{m}), \\
\frac{1}{2 r d_{m}} (\min(r,d_{m}) \sqrt{x^2 - \min(r,d_{m})^2} \\
+x^2 \arcsin(\frac{\min(r,d_{m})}{x})),\\
\min(r,d_{m}) < x \leq \max(r,d_{m}), \\
\frac{1}{2 r d_{m}} (\min(r,d_{m}) \sqrt{\max(r,d_{m})^2 - \min(r,d_{m})^2} \\
+d_{m} (\sqrt{x^2 - d_{m}^2} - \sqrt{\max(r,d_{m})^2 - d_{m}^2})  \\
+r (\sqrt{x^2 - r^2} - \sqrt{\max(r,d_{m})^2 - r^2}) \\
+\max(r,d_{m})^2 (\arccos(\frac{r}{\max(r,d_{m})})  \\
+\arcsin(\frac{\min(r,d_{m})}{\max(r,d_{m})}) - \arcsin(\frac{d_{m}}{\max(r,d_{m})}))  \\
+ x^2 (\arcsin(\frac{d_{m}}{x}) - \arccos(\frac{r}{x}))),\\
\max(r,d_{m}) < x \leq \sqrt{d_{m}^2 + r^2}, \\
1, x > \sqrt{d_{m}^2 + r^2}.
\end{cases}
\end{align}

The arrival intensity of the blockers that enter the zone, which affects the LoS for the third scenario, is then $\lambda = \lambda_I$.

\end{appendices}

%\balance

\bibliographystyle{ieeetr}
\bibliography{mmW_Mobility_of_Blockers}
%################################################################################################################################
\begin{IEEEbiography}[{\includegraphics[width=1in,height=1.25in,clip,keepaspectratio]{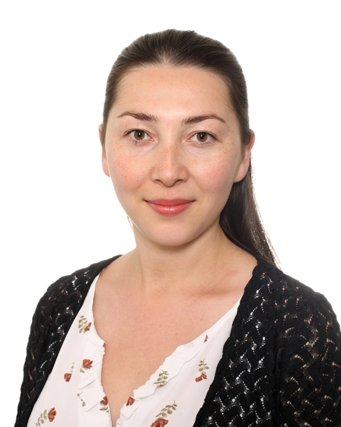}}]{Margarita Gapeyenko}
is a Ph.D. candidate at the Laboratory of Electronics and Communications Engineering at Tampere University of Technology, Finland. She earned her M.Sc. degree in Telecommunication Engineering from University of Vaasa, Finland, in 2014, and B.Sc. degree in Radio-Engineering, Electronics, and Telecommunications from Karaganda State Technical University, Kazakhstan, in 2012. Her research interests include mathematical analysis, performance evaluation, and optimization methods of future wireless networks, device-to-device communication, and 5G-grade heterogeneous networks. 
\end{IEEEbiography}

\begin{IEEEbiography}[{\includegraphics[width=1in,height=1.25in,clip,keepaspectratio]{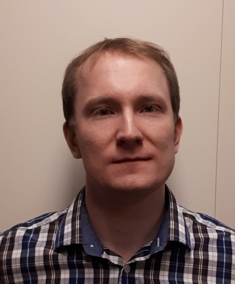}}]{Andrey Samuylov}
received the Ms.C. in Applied Mathematics and Cand.Sc. in Physics and Mathematics from the RUDN University, Russia, in 2012 and 2015, respectively. Since 2015 he is working at Tampere University of Technology as a researcher, focusing on mathematical performance analysis of various 5G wireless networks technologies. His research interests include P2P networks performance analysis, performance evaluation of wireless networks with enabled D2D communications, and mmWave band communications. 
\end{IEEEbiography}

\begin{IEEEbiography}[{\includegraphics[width=1in,height=1.25in,clip,keepaspectratio]{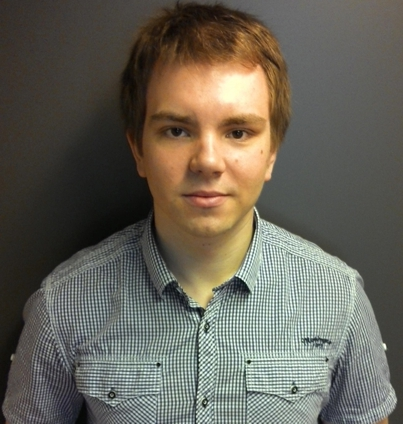}}]{Mikhail Gerasimenko}
is a Researcher at Tampere University of Technology in the Laboratory of Electronics and Communications Engineering. Mikhail received Specialist degree from Saint-Petersburg University of Telecommunications in 2011. In 2013, he obtained Master of Science degree from Tampere University of Technology. Mikhail started his academic career in 2011 and since then he appeared as (co-)author of multiple scientific journal and conference publications, as well as several patents. He also acted as reviewer and participated in educational activities. His main subjects of interest are wireless communications, machine-type communications, and heterogeneous networks. 
\end{IEEEbiography}

\begin{IEEEbiography}[{\includegraphics[width=1in,height=1.25in,clip,keepaspectratio]{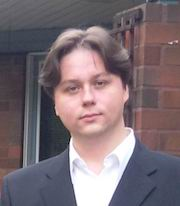}}]{Dmitri Moltchanov}
is a Senior Research Scientist in the Laboratory of Electronics and Communications Engineering, Tampere University of Technology, Finland. He received his M.Sc. and Cand.Sc. degrees from Saint-Petersburg State University of Telecommunications, Russia, in 2000 and 2002, respectively, and Ph.D. degree from Tampere University of Technology in 2006. His research interests include performance evaluation and optimization issues of wired and wireless IP networks, Internet traffic dynamics, quality of user experience of real-time applications, and traffic localization P2P networks. Dmitri Moltchanov serves as TPC member in a number of international conferences. He authored more than 80 publications.  
\end{IEEEbiography}

\begin{IEEEbiography}[{\includegraphics[width=1in,height=1.25in,clip,keepaspectratio]{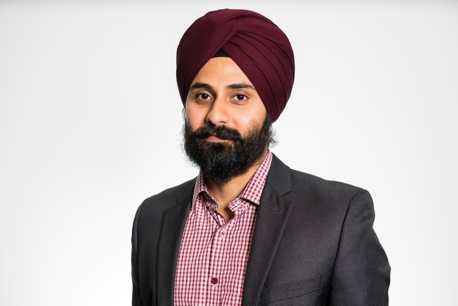}}]{Sarabjot Singh}
('SM 09, M' 15) is a Principal Engineer at Uhana Inc. CA.  He received the B. Tech. from IIT, India, and  the M.S.E and Ph.D. in EE from UT Austin. His past affiliations include Intel,  Nokia Technologies, Bell Labs, and Qualcomm Inc, where he worked on protocol and algorithm design for next generation of cellular and WiFi networks. Dr. Singh is interested in the system and architecture design of wireless networks leveraging theoretical and applied tools. He is a co-author of more than 40 patent applications, and multiple conference and journal papers. He was the recipient of the President of India Gold Medal in 2010, the ICC Best Paper Award in 2013, and recognized for being a prolific inventor at Intel Corp. 
\end{IEEEbiography}

\begin{IEEEbiography}[{\includegraphics[width=1in,height=1.25in,clip,keepaspectratio]{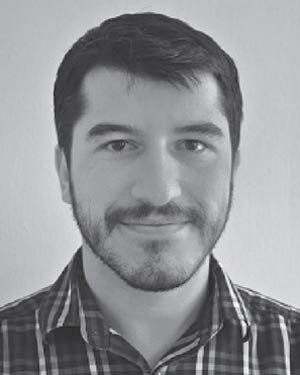}}]{Mustafa Riza Akdeniz}
(S'09) received the B.S. degree in electrical and electronics engineering from Bogazici University, Istanbul, Turkey, in 2010 and the Ph.D. degree in electrical and computer engineering at New York University Tandon School of Engineering, Brooklyn, NY in 2016. He is working as a research scientist for Intel Labs in Santa Clara, CA. His research interests include wireless communications, wireless channel modeling, and information theory. 
\end{IEEEbiography}

\begin{IEEEbiography}[{\includegraphics[width=1in,height=1.25in,clip,keepaspectratio]{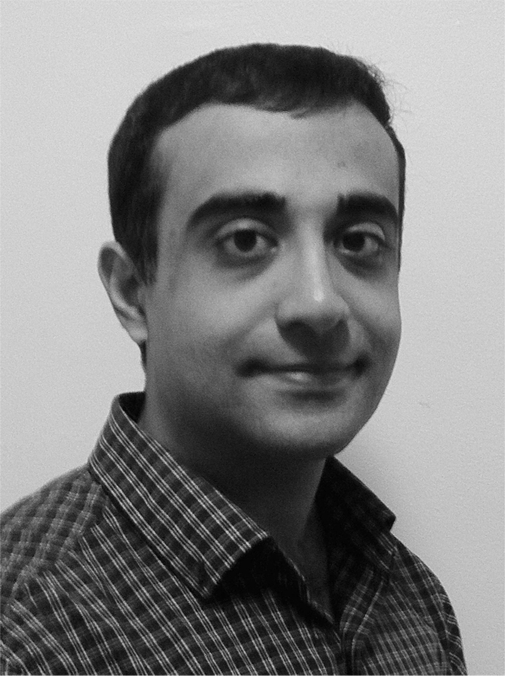}}]{Ehsan Aryafar}
is an Assistant Professor of Computer Science at Portland State University. Prior to that and from 2013 to 2017, he was a Research Scientist at Intel Labs in Santa Clara, CA. He received the B.S. degree in Electrical Engineering from Sharif University of Technology, Iran, in 2005, and the M.S. and Ph.D. degrees in Electrical and Computer Engineering from Rice University, Houston, Texas, in 2007 and 2011, respectively. From 2011 to 2013, he was a Post-Doctoral Research Associate in the Department of Electrical Engineering at Princeton University. His research interests are in the areas of wireless networks and networked systems, and span both algorithm design as well as system prototyping. He has more than 30 issued and pending patents in the areas of mobile and wireless systems. 
\end{IEEEbiography}

\begin{IEEEbiography}[{\includegraphics[width=1in,height=1.25in,clip,keepaspectratio]{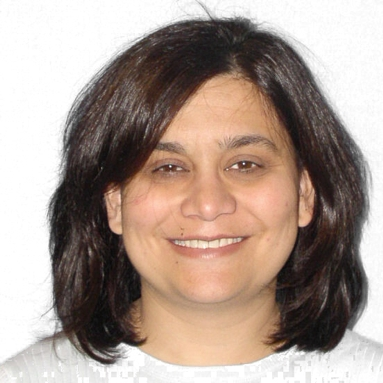}}]{Nageen Himayat}
is a Principal Engineer with Intel Labs, where she leads a team  conducting research on several aspects of next generation (5G/5G+) of mobile broadband systems. Her research contributions span areas such as multi-radio heterogeneous networks, mm-wave communication, energy-efficient designs, cross layer radio resource management, multi-antenna, and non-linear signal processing techniques. She has authored over 250 technical publications, contributing to several IEEE peer-reviewed publications, 3GPP/IEEE standards, as well as holds numerous patents. Prior to Intel, Dr. Himayat was with Lucent Technologies and General Instrument Corp, where she developed standards and systems for both wireless and wire-line broadband access networks. Dr. Himayat obtained her B.S.E.E degree from Rice University, and her Ph.D. degree from the University of Pennsylvania. She also holds an MBA degree from the Haas School of Business at University of California, Berkeley.  
\end{IEEEbiography}

\begin{IEEEbiography}[{\includegraphics[width=1in,height=1.25in,clip,keepaspectratio]{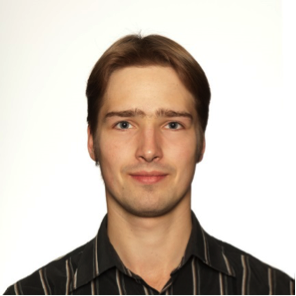}}]{Sergey Andreev}
is a Senior Research Scientist in the Laboratory of Electronics and Communications Engineering at Tampere University of Technology, Finland. He received the Specialist degree (2006) and the Cand.Sc. degree (2009) both from St. Petersburg State University of Aerospace Instrumentation, St. Petersburg, Russia, as well as the Ph.D. degree (2012) from Tampere University of Technology. Sergey (co-)authored more than 100 published research works on wireless communications, energy efficiency, heterogeneous networking, cooperative communications, and machine-to-machine applications.  
\end{IEEEbiography}

\begin{IEEEbiography}[{\includegraphics[width=1in,height=1.25in,clip,keepaspectratio]{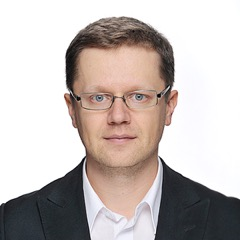}}]{Yevgeni Koucheryavy}
is a Full Professor in the Laboratory of Electronics and Communications Engineering of Tampere University of Technology (TUT), Finland. He received his Ph.D. degree (2004) from TUT. He is the author of numerous publications in the field of advanced wired and wireless networking and communications. His current research interests include various aspects in heterogeneous wireless communication networks and systems, the Internet of Things and its standardization, as well as nanocommunications. He is Associate Technical Editor of IEEE Communications Magazine and Editor of IEEE Communications Surveys and Tutorials.
\end{IEEEbiography}
\end{document}